\documentclass[12pt,a4paper]{article}
\usepackage[utf8]{inputenc}
\usepackage[T1]{fontenc}
\usepackage{amsmath,amssymb,amsthm}
\usepackage{graphicx}
\usepackage{booktabs}
\usepackage[margin=2.5cm]{geometry}
\usepackage[colorlinks=true,linkcolor=blue,citecolor=blue,urlcolor=blue]{hyperref}
\usepackage{xcolor}
\usepackage{microtype}
\theoremstyle{plain}
\newtheorem{theorem}{Theorem}[section]

\newtheorem{corollary}[theorem]{Corollary}

\theoremstyle{definition}
\newtheorem{definition}[theorem]{Definition}
\theoremstyle{remark}
\newtheorem{remark}[theorem]{Remark}

\newcommand{\R}{\mathbb{R}}
\newcommand{\N}{\mathbb{N}}
\newcommand{\Fp}{F_p}
\newcommand{\Psr}{\Psi_r}
\newcommand{\Ss}{\mathcal{S}}
\newcommand{\PT}{\mathcal{PT}}
\newcommand{\fa}{\mathfrak{a}}
\newcommand{\fb}{\mathfrak{b}}
\newcommand{\dd}{\,\mathrm{d}}

\title{Edge-cusped compactons from gradient-power nonlinear dispersion: exact transitions and singular spectral structure}

\author{Francisco R.\ Villatoro\thanks{E-mail: \href{mailto:frvillatoro@uma.es}{frvillatoro@uma.es}. ORCID: \href{https://orcid.org/0000-0003-4314-6213}{0000-0003-4314-6213}.}\\[2pt]
\small Escuela de Ingenier\'{\i}as Industriales, Universidad de M\'alaga, E-29071, M\'alaga, Spain}

\date{}

\begin{document}
\maketitle

\begin{abstract}
Degenerate nonlinear dispersion can generate compactly supported solitary waves, but the mechanisms controlling compactification, free-boundary regularity and linearised spectral structure are usually tied to the same nonlinear exponent. We introduce a conservative family depending on two nonlinear exponents, \(p>1\) and \(r>0\), and on a positive dispersion coefficient \(\mu\), with \(p\) acting on the nonlinear flux and \(r\) on its gradient. The family reduces to the absolute Rosenau--Hyman equation when \(r=1\) and \(\mu=1\), and it preserves real-valued fluxes and the compacton--anticompacton symmetry. The deformation leaves the Rosenau--Hyman amplitude--speed law unchanged, but makes the width amplitude dependent except at the distinguished value $r=1$. Two independent transition curves emerge: $pr=1$ separates infinite-tail solitary waves from compactons, whereas $r(p-1)=2$ separates compactons with a regular free boundary from edge-cusped compactons with divergent slope. The latter remain genuine distributional solutions, with no singular measures at either the free boundary or the crest. For $r\ge1$, the linearisation factors through a singular Sturm--Liouville operator whose natural Friedrichs realisation is exactly solvable: after a compacton-adapted change of variable, the eigenvalue equation is of Jacobi--hypergeometric type, with two explicit spectral sequences, exactly one negative eigenvalue and a one-dimensional translation kernel. Numerical experiments connect these exact structures to approximation and dynamics. The measured Fourier tails and projection errors agree with the rates predicted by the free-boundary exponent, while padded Fourier--Galerkin time integration propagates the cusped waves coherently at percent-level error with small phase drift. A structure-adapted Jacobi--Galerkin diagnostic of the full generator finds no detectable real part in the tested truncated point spectra. The family therefore provides an explicit testbed in which geometric singularity, weak admissibility, spectral structure and numerical regularity of compactons can be studied within a single degenerate dispersive model.
\end{abstract}

\noindent\textbf{Keywords.} Compactons; degenerate nonlinear dispersion; free boundaries; nonlinear waves; singular Sturm--Liouville operators; Rosenau--Hyman equation.

\noindent\textbf{MSC 2020.} 35Q53; 35C07; 35D30; 34B24; 35P05.

\section{Introduction}
\label{sec:intro}

Degenerate nonlinear dispersion supports wave phenomena that have no analogue in uniformly dispersive equations. Most notably, solitary waves may terminate at a finite free boundary and become compactons, so that the geometry of the support, the regularity of the profile and the degeneracy of the linearised operator are coupled. This coupling creates a useful but difficult class of nonlinear-wave problems: the same loss of regularity that produces compact support also controls weak admissibility, spectral approximation and high-frequency numerical behaviour. Explicit models in which these mechanisms can be separated are therefore valuable beyond the construction of a particular travelling wave.

The canonical example is the Rosenau--Hyman equation $K(p,p)$,
\begin{equation*}
 u_t+(u^p)_x+(u^p)_{xxx}=0,
\end{equation*}
whose compactons have an amplitude-independent width and obey a power-law amplitude--speed relation \cite{RosenauHyman1993,Rosenau1994,RosenauZilburg2018}. For non-integer $p$, however, $u^p$ is not real on negative waves. The absolute equation $|K|(p,p)$ replaces $u^p$ by the odd flux $|u|^{p-1}u$. It retains the compacton, admits real compactons and anticompactons for arbitrary real $p>1$, and has shown improved numerical robustness in collisions \cite{GarralonRusVillatoro2023CSF,GarralonRusVillatoro2023CNSNS}. These equations provide a natural setting in which to ask which features of a compacton are specific to the Rosenau--Hyman nonlinear dispersion and which survive a broader degeneracy.

We address this question with the two-exponent family
\begin{equation}
\label{eq:E}
u_t+\partial_x \Fp(u)+\mu\,\partial_x^{2}\Psr\!\bigl(\partial_x \Fp(u)\bigr)=0,
\qquad
\Fp(u)=|u|^{p-1}u,
\quad
\Psr(q)=|q|^{r-1}q,
\end{equation}
with $p>1$, $r>0$ and $\mu>0$. It reduces to $|K|(p,p)$ when $r=1$ and $\mu=1$, but the extra exponent $r$ independently tunes the degeneracy associated with the gradient of the nonlinear flux. This makes \eqref{eq:E} more than a one-parameter catalogue of compactons: it is a solvable test family in which the mechanisms governing support, edge regularity and linearised degeneracy can be varied separately. Gradient powers are also reminiscent of the $\PT$-symmetric KdV deformations introduced by Bender and collaborators \cite{BenderCooperKhareMihailaSaxena2009}; for odd integer $r$ the real branch of that construction motivates the inner gradient power in \eqref{eq:E}. We do not, however, use $\PT$ symmetry as the defining framework: for non-integer $r$ the present model is a real extension rather than a non-Hermitian deformation, and the $\PT$-symmetric compactons of \cite{BenderCooperKhareMihailaSaxena2009} belong to the Hamiltonian Cooper--Shepard--Sodano family \cite{CooperShepardSodano1993}. Broader third-order compacton classes have also been studied, including $K(f^m,g^n)$ equations \cite{IqbalNaeem2022}, while nonsmooth compact waves with singular crest regularity arise in peakompacton hierarchies \cite{ChristovKressSaxena2017}. Recent work has also systematised weak formulations for $K(m,n)$ compactons and shown that compactly supported profiles can remain meaningful beyond the classical-solution regime \cite{AncoGandarias2024}. The distinguishing feature of \eqref{eq:E} is that the second exponent acts on the gradient of the nonlinear flux itself, leading to independent control of support and edge regularity.

The exact solvability of \eqref{eq:E} exposes three distinct structural mechanisms. First, the travelling-wave problem is reduced to a single quadrature and yields an inverse-beta profile (Theorem~\ref{thm:tw}). The Rosenau--Hyman amplitude law
\begin{equation*}
 A^{p-1}=\frac{2pc}{p+1}
\end{equation*}
survives for every $r$, whereas the width is amplitude independent only on the distinguished slice $r=1$. More importantly, two different critical curves appear in the $(p,r)$ plane. The condition $pr=1$ marks the transition between infinite-tail solitary waves and compact support, while
\begin{equation*}
 r(p-1)=2
\end{equation*}
marks the transition from a regular free boundary to an edge cusp with divergent slope. Compactification and edge singularity are therefore separate phenomena.

Second, the edge cusp does not destroy the equation in the weak sense. The nonlinear dispersive flux remains sufficiently regular that the extension by zero introduces no delta or delta-prime defect at the free boundary; the loss of second-derivative regularity at the crest is likewise harmless distributionally (Theorem~\ref{thm:weak}). Thus divergent profile slope and weak admissibility can coexist in a controlled way.

Third, for $r\ge1$ the compacton-adapted linearisation has an exactly solvable self-adjoint factor. In the variable $y=(U/A)^{p-1}$ the corresponding Sturm--Liouville problem becomes a Jacobi--hypergeometric equation. The natural Friedrichs realisation has two explicit interlacing spectral sequences, exactly one negative eigenvalue and a one-dimensional kernel generated by translation, with eigenvalues independent of the speed $c$ and dispersion scale $\mu$ (Theorem~\ref{thm:spectrum}). This exact spectral information is unusual for a compacton problem with singular endpoints and a degenerate crest.

The numerical results connect these analytical structures to approximation and dynamics rather than merely checking formulas. Direct time integration propagates the $p=2$, $r=3$ and $r=5$ edge-cusped waves coherently over unit time with percent-level phase-aligned errors and small drift of the conserved quantities. More decisively, the Fourier coefficients of the exact profiles decay with the rates predicted by the free-boundary exponent, $|\widehat U_k|\sim k^{-1-\alpha}$, and the $L^2$ projection errors follow the corresponding $N^{-\alpha-1/2}$ laws. The less regular $r=5$ edge therefore produces a quantitatively slower Fourier convergence than $r=3$. Finally, a Jacobi--Galerkin truncation of the full generator $a\partial_z\Ss$ yields non-zero eigenvalues that are numerically imaginary to roundoff over the tested truncations. We use this only as evidence for modal neutrality of the truncated point spectrum, not as a stability theorem.

The broader point is that \eqref{eq:E} separates several notions that are often conflated in compacton models: compact support, free-boundary smoothness, weak solvability, the Morse index of a self-adjoint spectral factor, modal behaviour of the generator and numerical spectral regularity. In particular, the exact count $n(\Ss)=1$ does not imply stability because the equation has no evident Hamiltonian structure, and the generator remains degenerate at the free boundary and, for $r>1$, at the crest. Recent work on degenerate KdV-type equations shows that such degeneracies can produce severe high-frequency ill-posedness even when modal diagnostics are benign \cite{AmbroseSimpsonWrightYang2012,JeongOh2026,GermainHarropGriffithsMarzuola2019}. The present family supplies an explicit background on which those questions can be posed with unusually detailed travelling-wave and spectral information.

The paper is organised as follows. Section~\ref{sec:family} records the structural properties and scaling of the gradient-power family. Section~\ref{sec:tw} develops the exact travelling-wave theory and the two critical curves. Section~\ref{sec:weak} establishes weak admissibility of the cusped profiles. Section~\ref{sec:lin} derives the singular linearisation and the exact spectrum of its Friedrichs self-adjoint factor. Section~\ref{sec:num} connects the analytical singularities with static and time-dependent numerical diagnostics. Section~\ref{sec:disc} discusses the mechanisms exposed by the family and the remaining questions for the full degenerate evolution.

\section{The gradient-power family}
\label{sec:family}

Throughout, $\Fp(u)=|u|^{p-1}u$ and $\Psr(q)=|q|^{r-1}q$ with $p>1$ and $r>0$. Both functions are odd, strictly increasing and of class $C^1$ away from the origin, with $\Fp'(u)=p|u|^{p-1}$ and $\Psr'(q)=r|q|^{r-1}$. The family \eqref{eq:E} is in conservation form, so that the mass
\begin{equation}
\label{eq:mass}
M[u]=\int_{\R}u\dd x
\end{equation}
is formally conserved. Multiplication of \eqref{eq:E} by $\Fp(u)$ shows that
\begin{equation}
\label{eq:Q}
Q[u]=\frac{1}{p+1}\int_{\R}|u|^{p+1}\dd x
\end{equation}
is also formally conserved. Indeed $\Fp(u)u_t=\partial_t|u|^{p+1}/(p+1)$, $\Fp(u)\partial_x\Fp(u)=\partial_x\Fp(u)^2/2$ and, writing $F=\Fp(u)$,
\begin{equation}
\label{eq:Qident}
F\,\partial_x^2\Psr(F_x)=\partial_x\bigl(F\,\partial_x\Psr(F_x)\bigr)-\frac{r}{r+1}\,\partial_x|F_x|^{r+1},
\end{equation}
because $F_x\partial_x\Psr(F_x)=r|F_x|^{r-1}F_xF_{xx}=\tfrac{r}{r+1}\partial_x|F_x|^{r+1}$. These are the two invariants of $|K|(p,p)$ that persist for all $r$; for $r=1$ the equation has additional invariants \cite{GarralonRusVillatoro2023CSF} whose fate for $r\neq1$ we do not address.

The family is invariant under $u\mapsto-u$, so that any solution has a mirror image of opposite sign; in particular each compacton has an anticompacton. It is also invariant under the combined transformation $x\mapsto -x$, $t\mapsto -t$, which is the $\PT$ symmetry in the convention of \cite{BenderCooperKhareMihailaSaxena2009} with $u$ unchanged; each of the three terms in \eqref{eq:E} changes sign under it. Finally, \eqref{eq:E} has the two-parameter scaling group
\begin{equation}
\label{eq:scaling}
u\mapsto \lambda u,\qquad x\mapsto \lambda^{p(r-1)/(r+1)}x,\qquad t\mapsto \lambda^{p(r-1)/(r+1)+1-p}t,
\end{equation}
which is obtained by balancing the three terms. The exponent of $x$ vanishes precisely for $r=1$, which is the origin of the amplitude-independent width of the Rosenau--Hyman compacton; for $r>1$ wider compactons are taller.

For $r>1$ the highest-order term of \eqref{eq:E} is degenerate wherever $u=0$ and also wherever $\partial_x\Fp(u)=0$, since its coefficient in quasilinear form is proportional to $|u|^{p-1}|\partial_x\Fp(u)|^{r-1}$. The second kind of degeneracy is absent for $r=1$ and is the main new feature of the family; in the compacton it occurs at the crest. We do not know of a Hamiltonian formulation of \eqref{eq:E} for $r\neq1$, and the Lagrangian structure that underlies the $\PT$-symmetric compactons of \cite{BenderCooperKhareMihailaSaxena2009} does not apply to the Rosenau--Hyman form of the nonlinear dispersion.

\section{Exact travelling waves}
\label{sec:tw}

We look for solutions $u(x,t)=U(z)$, $z=x-ct$, with $c>0$, in the class of even, non-negative profiles that are non-increasing on $z>0$, with $U$ and its fluxes vanishing in the vacuum. On the support we write $W=U^p=\Fp(U)$. Since $U\ge0$, \eqref{eq:E} becomes
\begin{equation}
\label{eq:twode}
-cU'+W'+\mu\bigl[\Psr(W')\bigr]''=0,
\end{equation}
where the prime denotes $\dd/\dd z$. We use the standard notation $B(\fa,\fb)$ for the Euler beta function, $B_y(\fa,\fb)=\int_0^y s^{\fa-1}(1-s)^{\fb-1}\dd s$ for the incomplete beta function and $I_y(\fa,\fb)=B_y(\fa,\fb)/B(\fa,\fb)$ for its regularised form, and we introduce the exponents
\begin{equation}
\label{eq:abparams}
\fa=\frac{pr-1}{(p-1)(r+1)},\qquad \fb=\frac{r}{r+1},\qquad \alpha=\frac{r+1}{pr-1}.
\end{equation}

\begin{theorem}[Exact compactons]
\label{thm:tw}
Let $p>1$, $r>0$, $\mu>0$ and $c>0$, and assume $pr>1$. Up to translation, \eqref{eq:E} has exactly one travelling wave $u=U(x-ct)$ in the class described above with compact support, and the following hold.
\begin{enumerate}
\item[(i)] The amplitude is
\begin{equation}
\label{eq:amplitude}
A=U(0)=\Bigl(\frac{2pc}{p+1}\Bigr)^{1/(p-1)},
\end{equation}
independently of $r$ and $\mu$, so that the Rosenau--Hyman amplitude--speed law is preserved.
\item[(ii)] The support is $[-\ell,\ell]$ with half-width
\begin{equation}
\label{eq:width}
\ell=\Bigl(\frac{2\mu r}{r+1}\Bigr)^{1/(r+1)}\frac{p}{p-1}\,A^{p(r-1)/(r+1)}\,B(\fa,\fb).
\end{equation}
\item[(iii)] The profile is given, for $|z|\le\ell$, by
\begin{equation}
\label{eq:profile}
U(z)=A\,y(z)^{1/(p-1)},\qquad I_{y(z)}(\fa,\fb)=1-\frac{|z|}{\ell},
\end{equation}
and $U(z)=0$ for $|z|>\ell$. The profile is strictly decreasing on $(0,\ell)$.
\item[(iv)] At the free boundary, with $d=\ell-|z|\downarrow0$,
\begin{equation}
\label{eq:edge}
U=C_e\,d^{\alpha}\bigl(1+o(1)\bigr),\qquad C_e=A\Bigl(\frac{\fa\,B(\fa,\fb)}{\ell}\Bigr)^{\alpha}.
\end{equation}
Consequently $U'(\pm\ell)=0$ if $r(p-1)<2$, $U'$ has a finite nonzero one-sided limit if $r(p-1)=2$, and $|U'(z)|\to\infty$ as $|z|\to\ell$ if $r(p-1)>2$.
\item[(v)] At the crest, $U=A-\kappa|z|^{1+1/r}(1+o(1))$ with $\kappa=\frac{A}{p-1}\bigl(\fb B(\fa,\fb)/\ell\bigr)^{1+1/r}$. Thus $U\in C^{1,\min\{1,1/r\}}$ near $z=0$; in particular $U''$ is unbounded at the crest when $r>1$, the crest is quadratic when $r=1$, and the leading correction has vanishing second derivative at the crest when $0<r<1$.
\end{enumerate}
\end{theorem}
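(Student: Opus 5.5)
We will reduce \eqref{eq:twode} to a single first-order quadrature and then invert it. First, integrate \eqref{eq:twode} once on the support. Since $U$, $W$ and the dispersive flux $\mu[\Psr(W')]'$ vanish in the vacuum, the constant of integration is zero and $-cU+W+\mu[\Psr(W')]'=0$.

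Next, multiply by $W'=pU^{p-1}U'$ and integrate again, using $W'[\Psr(W')]'=\tfrac{r}{r+1}(|W'|^{r+1})'$ as in \eqref{eq:Qident}. With zero constant, this gives
\begin{equation*}
\frac{\mu r}{r+1}|W'|^{r+1}=\frac{cp}{p+1}U^{p+1}-\frac{p}{2p}\cdot p\,U^{2p}\Big/p ,
\end{equation*}
that is, $\frac{\mu r}{r+1}|W'|^{r+1}=\frac{pc}{p+1}U^{p+1}-\frac12U^{2p}$. The crest condition $W'(0)=0$ with $U(0)=A>0$ then forces $A^{p-1}=2pc/(p+1)$, which is (i). We also get $|W'|^{r+1}=\frac{r+1}{2\mu r}U^{p+1}(A^{p-1}-U^{p-1})$.

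To prove (ii) and (iii), substitute $U=Ay^{1/(p-1)}$, so that $W'=pA^{p}(p-1)^{-1}y^{1/(p-1)}y'$. Raising to the power $1/(r+1)$ turns the ODE into a separable equation of the form $K\,y^{\fa-1}(1-y)^{\fb-1}\,|y'|=1$. The exponents are fixed by bookkeeping. The powers of $y$ give $\frac{1}{p-1}-\frac{p+1}{(p-1)(r+1)}=\frac{pr-1}{(p-1)(r+1)}-1=\fa-1$. The factor $(1-y)^{-1/(r+1)}$ gives $\fb-1$. The constant is $K=\bigl(\frac{2\mu r}{r+1}\bigr)^{1/(r+1)}\frac{p}{p-1}A^{p(r-1)/(r+1)}$.

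Integrating from the crest ($y=1$) outward gives $K[B(\fa,\fb)-B_y(\fa,\fb)]=|z|$. The edge $y=0$ is reached at finite $|z|=\ell=KB(\fa,\fb)$ exactly when $\fa>0$, i.e.\ $pr>1$; this is where the hypothesis enters. The same relation gives \eqref{eq:profile}. Strict monotonicity follows because the integrand is positive.

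Uniqueness up to translation follows because any admissible even profile must satisfy the same first integrals with zero constants, and the separated equation then determines $|z|$ as a function of $y$. The subtle points are the following.
\begin{itemize}
\item[(a)] We must justify that the integration constants vanish. For this we use that the fluxes vanish in the vacuum and continuity across the free boundary.
\item[(b)] We must exclude non-unique branches, such as a solution staying at $y=1$ or gluing plateaus. Since $\fb<1$, the crest is reached in finite $z$ from either side. The class restriction (even, non-increasing on $z>0$, a single crest) rules out flat tops: on a plateau the identity would need $U^{p+1}(A^{p-1}-U^{p-1})=0$, and a plateau at $U=A$ still violates the first integral $-cA+A^p=0$, since $A^{p-1}=2pc/(p+1)\neq c$.
\end{itemize}
This uniqueness and justification step is the main obstacle; I would handle it through the first integral $-cU+W+\mu[\Psr(W')]'=0$ as just indicated.

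For the asymptotics (iv) and (v), expand the incomplete beta function at both ends.
\begin{itemize}
\item Near $y=0$, $B_y\sim y^{\fa}/\fa$. Hence $y^{\fa}\sim \fa B(\fa,\fb)\,d/\ell$, and $U=Ay^{1/(p-1)}$ gives exponent $\frac{1}{(p-1)\fa}=\alpha$ and the constant $C_e$. The slope trichotomy follows from comparing $\alpha$ with $1$: $\alpha>1\iff r+1>pr-1\iff r(p-1)<2$. Differentiating the asymptotic is legitimate because $U'$ is given explicitly by the ODE.
\item Near $y=1$, $B(\fa,\fb)-B_y\sim (1-y)^{\fb}/\fb$. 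Hence $1-y\sim(\fb B|z|/\ell)^{1/\fb}$ with $1/\fb=1+1/r$, and $U\approx A(1-(1-y)/(p-1))$ gives $\kappa$.
\end{itemize}
The H\"older class $C^{1,\min\{1,1/r\}}$ and the three crest cases then follow from the exponent $1+1/r$. The case $r<1$ gives an exponent greater than $2$, so the leading correction has vanishing second derivative at the crest.
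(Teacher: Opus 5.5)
Your proposal is correct and follows the paper's proof essentially step for step. It uses the same two integrations with vanishing constants, the first integral fixing $A$, and the same beta-type quadrature. The only difference there is bookkeeping: you separate variables in $y$ from the crest outward, while the paper substitutes $\omega=W_*y^{1/m}$ in the $W$-quadrature measured from the edge, which is equivalent. It also has the same exclusion of a plateau at $U=A$ via $-cA+A^p\neq0$, and the same incomplete-beta expansions at $y=0$ and $y=1$.

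There is one wording slip. Finite arrival at the crest comes from $\fb>0$, not $\fb<1$; the role of $\fb<1$ is to make $y\equiv1$ a competing solution of the separated equation, which is why the plateau argument is needed. Your use of the explicit formula for $U'$ from the first integral to justify the slope trichotomy in (iv) is, if anything, slightly more careful than the paper's reading of it from $\alpha$ alone.
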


The three quantities $\fa$, $\fb$ and $\alpha$ satisfy $\alpha\,\fa\,(p-1)=1$, and the two curves $pr=1$ and $r(p-1)=2$ in the $(p,r)$ plane are the loci $\fa=0$ and $\alpha=1$; see Figure~\ref{fig:profiles} for the case $p=2$.

\begin{proof}
Integrating \eqref{eq:twode} once and using the vanishing of $U$, $W$ and $\Psr(W')$ in the vacuum gives
\begin{equation}
\label{eq:int1}
-cU+W+\mu\bigl[\Psr(W')\bigr]'=0.
\end{equation}
Multiplying by $W'$, using $UW'=pU^pU'=\tfrac{p}{p+1}(W^{(p+1)/p})'$ and $[\Psr(W')]'W'=\tfrac{r}{r+1}(|W'|^{r+1})'$, and integrating once more with the same vanishing conditions, we obtain the first integral
\begin{equation}
\label{eq:firstint}
\frac{\mu r}{r+1}|W'|^{r+1}=\frac{cp}{p+1}W^{(p+1)/p}-\frac12W^{2}.
\end{equation}
The right-hand side is positive for $0<W<W_*$ and vanishes at $W_*$, where $W_*^{(p-1)/p}=2pc/(p+1)$. Since $W'=0$ at the crest, $W(0)=W_*$ and $A=W_*^{1/p}$, which is \eqref{eq:amplitude}. The value $W\equiv W_*$ cannot be attained on an interval, because \eqref{eq:int1} at $W=W_*$ gives $[\Psr(W')]'=(cA-A^p)/\mu\neq0$ whereas $\Psr(W')\equiv0$ there; hence flat tops are excluded and $W'<0$ on $(0,\ell)$, which yields strict monotonicity in (iii).

Introduce $m=(p-1)/p$ and $y=(W/W_*)^{m}=(U/A)^{p-1}$. Using $cp/(p+1)=W_*^{m}/2$, \eqref{eq:firstint} becomes $|W'|^{r+1}=\tfrac{r+1}{2\mu r}W^{(p+1)/p}(W_*^m-W^m)$. On $z>0$, where $W'<0$, the distance $s=\ell-z$ from the boundary to a point where $W$ takes a given value is
\begin{equation}
\label{eq:quad}
s=\int_0^{W}\frac{\dd\omega}{|W'(\omega)|}
=\Bigl(\frac{2\mu r}{r+1}\Bigr)^{\frac{1}{r+1}}\int_0^{W}\omega^{-\frac{p+1}{p(r+1)}}\bigl(W_*^m-\omega^m\bigr)^{-\frac{1}{r+1}}\dd\omega .
\end{equation}
The substitution $\omega=W_*y^{1/m}$ gives $\dd\omega=(W_*/m)y^{1/m-1}\dd y$, and a direct computation of the exponents shows that the integrand becomes $y^{\fa-1}(1-y)^{\fb-1}$ up to the constant factor
\begin{equation}
\label{eq:Cs}
C_s=\Bigl(\frac{2\mu r}{r+1}\Bigr)^{\frac{1}{r+1}}\frac{p}{p-1}W_*^{\,m-\frac{p+1}{p(r+1)}-\frac{m}{r+1}}
=\Bigl(\frac{2\mu r}{r+1}\Bigr)^{\frac{1}{r+1}}\frac{p}{p-1}A^{\,p\frac{r-1}{r+1}}.
\end{equation}
Hence $s=C_s B_y(\fa,\fb)$. The integral converges at $y=0$ if and only if $\fa>0$, that is $pr>1$, which is the compactness condition; evaluating at $y=1$ gives $\ell=C_sB(\fa,\fb)$, which is \eqref{eq:width}, and dividing gives \eqref{eq:profile}. Uniqueness follows because \eqref{eq:int1} and \eqref{eq:firstint} are forced by the vanishing conditions and, on $(0,\ell)$, the sign of $W'$ and the quadrature determine $W$; evenness then determines the other half.

For (iv), $I_y(\fa,\fb)=y^{\fa}/(\fa B(\fa,\fb))\,(1+O(y))$ as $y\to0$, so that $d/\ell=y^{\fa}/(\fa B)\,(1+O(y))$, whence $y=(\fa B d/\ell)^{1/\fa}(1+o(1))$ and $U=Ay^{1/(p-1)}$ gives \eqref{eq:edge} with $\alpha=1/(\fa(p-1))=(r+1)/(pr-1)$. The three regimes follow from $\alpha>1$, $\alpha=1$ and $\alpha<1$, and $\alpha=1$ is equivalent to $r(p-1)=2$. For (v), $1-I_y(\fa,\fb)=(1-y)^{\fb}/(\fb B)\,(1+O(1-y))$ as $y\to1$, so $|z|/\ell=(1-y)^{\fb}/(\fb B)\,(1+o(1))$, $1-y=(\fb B|z|/\ell)^{1/\fb}(1+o(1))$, and $1/\fb=1+1/r$; expanding $U=A y^{1/(p-1)}=A(1-(1-y)/(p-1)+\dots)$ gives the stated expansion.
\end{proof}

\begin{remark}[Recovery of the Rosenau--Hyman compacton]
For $r=1$ and $\mu=1$ one has $\fa=\fb=1/2$, $B(\tfrac12,\tfrac12)=\pi$ and $I_y(\tfrac12,\tfrac12)=\tfrac{2}{\pi}\arcsin\sqrt{y}$, so that $y=\cos^2\bigl(\tfrac{p-1}{2p}z\bigr)$ and \eqref{eq:profile} becomes $U=A\cos^{2/(p-1)}\bigl(\tfrac{p-1}{2p}z\bigr)$ with $\ell=\pi p/(p-1)$, the classical compacton of $K(p,p)$ and $|K|(p,p)$ \cite{RosenauHyman1993,GarralonRusVillatoro2023CSF}. For $p=2$ this is $U=\tfrac{4c}{3}\cos^2(z/4)$ with $\ell=2\pi$.
\end{remark}

\begin{remark}[Solitary waves without compact support]
If $pr\le1$ the quadrature \eqref{eq:quad} diverges at $W=0$ and the profile has an infinite tail. From \eqref{eq:firstint}, $|W'|\sim CW^{(p+1)/(p(r+1))}$ as $W\to0$, so the tail decays algebraically, $W\sim z^{-p(r+1)/(1-pr)}$, for $pr<1$, and exponentially for $pr=1$. We do not pursue this regime further.
\end{remark}

\begin{remark}[The case $p=2$]
For $p=2$ the amplitude is $A=4c/3$ for every $r$, the width is $\ell\propto c^{(r-1)/(r+1)}$ and the edge exponent is $\alpha=(r+1)/(2r-1)$. The compacton is edge-cusped for $r>2$; for $r=3$ one has $\alpha=4/5$ and $U'\sim d^{-1/5}$, and for $r=5$ one has $\alpha=2/3$ and $U'\sim d^{-1/3}$. With $c=\mu=1$ the total widths are $L=4\pi\approx12.566$, $L\approx6.5558$ and $L\approx5.3703$ for $r=1,3,5$ respectively, so that the deformation narrows the compacton at fixed speed.
\end{remark}

\begin{figure}[t]
\centering
\includegraphics[width=\textwidth]{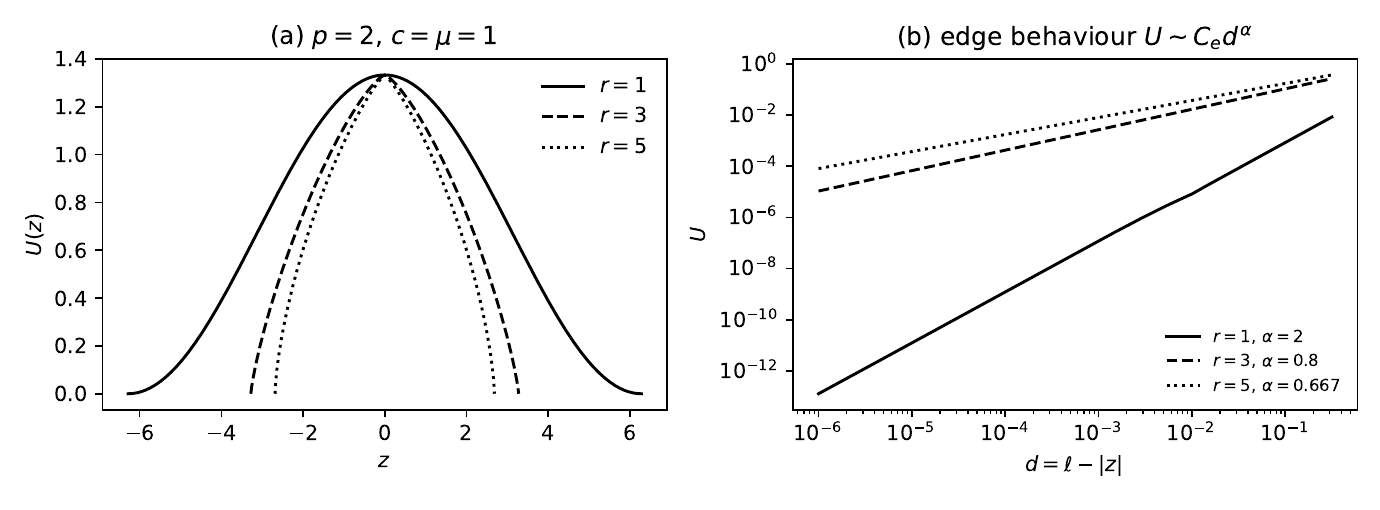}
\caption{(a) Compacton profiles of \eqref{eq:E} for $p=2$, $c=\mu=1$ and $r=1,3,5$, computed from \eqref{eq:profile}. (b) Behaviour near the free boundary in logarithmic scale; the slopes are the exponents $\alpha=2$, $4/5$ and $2/3$ of \eqref{eq:edge}. For $r=3$ and $r=5$ the slope is smaller than one and the derivative of the profile is unbounded at the boundary.}
\label{fig:profiles}
\end{figure}

\section{Weak admissibility of the edge-cusped compactons}
\label{sec:weak}

For $r(p-1)>2$ the compacton of Theorem~\ref{thm:tw} has an unbounded derivative at the free boundary, and for $r>1$ it has an unbounded second derivative at the crest, so it is not a classical solution of \eqref{eq:E}. We show that it is nevertheless a solution in the sense of distributions and, more precisely, that every term of the equation is a locally integrable function.

\begin{definition}
\label{def:weak}
Let $T>0$. A function $u\in L^\infty(\R\times(0,T))$ is a weak solution of \eqref{eq:E} if $\partial_x\Fp(u)\in L^1_{\rm loc}(\R\times(0,T))$, $\Psr(\partial_x\Fp(u))\in L^1_{\rm loc}(\R\times(0,T))$ and, for every $\varphi\in C_c^\infty(\R\times(0,T))$,
\begin{equation}
\label{eq:weakform}
\int_0^T\!\!\int_{\R}\Bigl[u\,\varphi_t+\Fp(u)\,\varphi_x-\mu\,\Psr\bigl(\partial_x\Fp(u)\bigr)\varphi_{xx}\Bigr]\dd x\dd t=0 .
\end{equation}
\end{definition}

\begin{theorem}[Weak admissibility]
\label{thm:weak}
Let $U$ be the compacton of Theorem~\ref{thm:tw} extended by zero outside $[-\ell,\ell]$, and let $u(x,t)=U(x-ct)$. Then $G=\Psr(\partial_x\Fp(u))$ belongs to $C^1(\R)$ in $x$ for each $t$, its second derivative $G_{xx}$ is in $L^1(\R)$, and $u_t$, $\partial_x\Fp(u)$ and $\mu G_{xx}$ are functions in $L^1(\R)$ that satisfy \eqref{eq:E} almost everywhere. In particular $u$ is a weak solution in the sense of Definition~\ref{def:weak}, and no singular measure is supported on the free boundary or on the crest.
\end{theorem}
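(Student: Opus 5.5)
The approach is to work entirely with the travelling-wave identities from the proof of Theorem~\ref{thm:tw}, namely \eqref{eq:int1} and \eqref{eq:firstint}, and to express $G=\Psr(W')$, extended by zero, as an explicit function of $W$ (equivalently of $y$). On $(0,\ell)$ we have $W'<0$, so \eqref{eq:firstint} gives
\[
G=-\Bigl(\tfrac{r+1}{2\mu r}\Bigr)^{r/(r+1)}\bigl(W^{(p+1)/p}(W_*^m-W^m)\bigr)^{r/(r+1)},
\]
and $G$ is odd in $z$. Since $W$ is continuous on $\R$ and vanishes at $\pm\ell$, this formula shows at once that $G$ is continuous on $\R$. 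It vanishes at $z=0$ and at the free boundary, and it is zero in the vacuum.

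The next step is to show $G\in C^1$ by means of \eqref{eq:int1}. On $(-\ell,\ell)\setminus\{0\}$ the profile is smooth: $W$ is a smooth function of $y$ there, and $y$ is smooth in $z$ by the quadrature with nonvanishing integrand. Hence \eqref{eq:int1} holds classically there and gives $G'=(cU-W)/\mu$. The right-hand side is continuous on all of $\R$, tends to $0$ at $\pm\ell$, and matches the vacuum value $0$. At the crest it equals $(cA-A^p)/\mu$ from both sides. A continuous function whose derivative off a finite set extends continuously is $C^1$ by the mean value theorem, so $G\in C^1(\R)$ with $G'=(cU-W)/\mu$ everywhere. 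This is the key point: the only candidates for singular measures are the jumps of $G$ and $G'$ at $z=\pm\ell$ and $z=0$, and both jumps vanish.

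For $G''$, we differentiate $G'=(cU-W)/\mu$ off $\{0,\pm\ell\}$, which gives $G''=(c-pU^{p-1})U'/\mu$. This is integrable because $U'\in L^1$: $U$ is monotone on each half of its support, so its total variation is $2A$. Also $G'$ is absolutely continuous, being $(cU-W)/\mu$ with $U$ and $W=U^p$ monotone and continuous on each piece, hence of bounded variation and continuous, with derivative locally integrable. One checks the absolute continuity from the explicit edge asymptotics \eqref{eq:edge}: there $U\sim C_e d^\alpha$ with $\alpha>0$, so $U'\sim d^{\alpha-1}$ is integrable. At the crest, item (v) gives $U'\sim |z|^{1/r}$, which is bounded. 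Hence $G''\in L^1(\R)$ is its distributional second derivative, with no singular part. Likewise $\partial_x\Fp(u)=W'=pU^{p-1}U'\in L^1$, and $u_t=-cU'\in L^1$. Here $U'$ has no atoms because $U$ is continuous, including at $\pm\ell$ where $U=0$.

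Finally, we verify \eqref{eq:E} almost everywhere and derive the weak form. Differentiating \eqref{eq:int1} on the open pieces gives \eqref{eq:twode}, i.e.\ $-cU'+W'+\mu G''=0$ a.e., and in the vacuum all terms vanish. The weak form \eqref{eq:weakform} then follows by integrating by parts in $x$ twice, which is legitimate since $G\in C^1$ with $G'$ absolutely continuous, and once in $t$ with $u$ Lipschitz-in-time as a translate of the absolutely continuous $U$. The local integrability requirements of Definition~\ref{def:weak} are immediate from boundedness of $G$ and $W'\in L^1$. The main obstacle is the behaviour at the free boundary in the cusped regime $\alpha<1$, where $U'$ is unbounded. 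There one must make sure that $W'$ and $G$ still tend to zero. This follows from $G\propto W^{(p+1)r/(p(r+1))}(\cdots)\to0$, and for $W'$ from $W\sim d^{p\alpha}$ with $p\alpha-1=(p+1)/(pr-1)>0$. I would state these two exponents explicitly as the crux.
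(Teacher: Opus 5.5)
Your proposal is correct and follows essentially the paper's route: $G'=(cU-W)/\mu$ from \eqref{eq:int1} is continuous on $\R$ and vanishes at $\pm\ell$, so $G\in C^1(\R)$ and $G''=(cU'-W')/\mu\in L^1$; you add the minor refinements of getting $G\to0$ at the edge directly from the first integral \eqref{eq:firstint}, and $U'\in L^1$ from the monotonicity of $U$, instead of inferring derivative behaviour from the function asymptotics $W\sim d^{p\alpha}$ and $U\sim C_e d^{\alpha}$. One slip: $u$ is not Lipschitz in $t$ pointwise when $\alpha<1$, but the time integration by parts only needs absolute continuity of $t\mapsto U(x-ct)$, or Lipschitz continuity into $L^1$ since $U\in W^{1,1}$.
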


\begin{proof}
Work in the variable $z$. On $(-\ell,\ell)$ set $W=U^p$ and $G=\Psr(W')$. By Theorem~\ref{thm:tw}(iv), $W=U^p\sim C_e^pd^{p\alpha}$ near the boundary, and $p\alpha=p(r+1)/(pr-1)>1$ because $pr-1<p(r+1)$; hence $W'\to0$ and $G=\Psr(W')\to0$ at $z=\pm\ell$. By \eqref{eq:int1}, $G'=(cU-W)/\mu$ on $(-\ell,\ell)$, which is continuous on $[-\ell,\ell]$ and vanishes at $\pm\ell$. Therefore the zero extension of $G$ is continuous with a continuous first derivative on $\R$, that is $G\in C^1(\R)$, and no jump of $G$ or of $G'$ occurs at the boundary; in particular no $\delta$ or $\delta'$ contribution arises when $G''$ is computed in the sense of distributions. At the crest, $U$ and $W$ are continuous, $U'(0)=W'(0)=0$ by Theorem~\ref{thm:tw}(v), and therefore $G'=(cU-W)/\mu$ is continuous across $z=0$ with $G''=(cU'-W')/\mu$ continuous there as well; the unboundedness of $U''$ at the crest is invisible in $G$. On $(-\ell,\ell)\setminus\{0\}$ the distributional derivative $G''$ is the function $(cU'-W')/\mu$, which is in $L^1$ because $U'\sim d^{\alpha-1}$ with $\alpha>0$ is integrable and $W'$ is bounded. Collecting, $-cU'$, $W'$ and $\mu G''$ are $L^1$ functions on $\R$ whose sum vanishes identically on $(-\ell,\ell)$ by \eqref{eq:twode} and outside the support because all three vanish there. Multiplying by a test function and integrating by parts, which is legitimate because $G\in C^1$ with $G''\in L^1$, gives \eqref{eq:weakform}.
\end{proof}

\begin{corollary}
\label{cor:reg}
The compacton satisfies $U\in C^{0,\min\{\alpha,1\}}(\R)$, and $U\in C^{1,\min\{\alpha-1,1/r,1\}}(\R)$ when $r(p-1)<2$. Its derivative belongs to $L^{q}(\R)$ for $q<1/(1-\alpha)$ when $\alpha<1$, so that $U\in W^{1,1}(\R)$ but $U\notin W^{1,\infty}(\R)$ in the edge-cusped regime.
\end{corollary}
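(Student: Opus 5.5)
The plan is to read everything off the local asymptotics of Theorem~\ref{thm:tw}(iv)--(v), together with the first-order equation for $U'$ obtained from the first integral \eqref{eq:firstint}. On $(0,\ell)$ we have $W'<0$ and $|W'|^{r+1}=\tfrac{r+1}{2\mu r}W^{(p+1)/p}(W_*^m-W^m)$. Since $U'=W'/(pU^{p-1})$, $U'$ is an explicit continuous function of $U$ on $(0,\ell)$:
\begin{equation*}
|U'|=\frac{1}{p}\Bigl(\frac{r+1}{2\mu r}\Bigr)^{1/(r+1)}U^{\frac{p+1}{r+1}-(p-1)}\bigl(A^{p-1}-U^{p-1}\bigr)^{1/(r+1)}.
\end{equation*}
The interior is therefore smooth away from the crest. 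All regularity questions reduce to the two endpoints $d\to0$ and $z\to0$, and the global statements follow by evenness and by the zero extension.

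\textbf{Hölder bound on $U$.} Near the edge, $U'\sim -C_e\alpha\, d^{\alpha-1}$. This follows either by differentiating \eqref{eq:edge} or, more safely, by substituting $U\sim C_e d^\alpha$ into the display above; the exponent $\tfrac{p+1}{r+1}-(p-1)$ times $\alpha$ equals $\alpha-1$. If $\alpha<1$, integrating $|U'|\lesssim d^{\alpha-1}$ gives $|U(z_1)-U(z_2)|\lesssim|z_1-z_2|^\alpha$ near $\pm\ell$. For this, split according to whether the two points lie on the same side of the boundary, and use concavity of $d\mapsto d^\alpha$. If $\alpha\ge1$, $U'$ is bounded and $U$ is Lipschitz. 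Elsewhere $U'$ is bounded, since it is continuous with $U'(0)=0$. Together these give $U\in C^{0,\min\{\alpha,1\}}$.

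\textbf{Hölder bound on $U'$ when $r(p-1)<2$.} In this regime $\alpha>1$, so $U'\to0$ at $\pm\ell$ and the zero extension is $C^1$. Near the edge, the estimate $|U''|\lesssim d^{\alpha-2}$, obtained by differentiating the ODE for $U'$ in terms of $U$, gives $U'\in C^{0,\min\{\alpha-1,1\}}$ there. Near the crest, the display gives $|U'|\sim (A-U)^{1/(r+1)}\sim|z|^{(1+1/r)/(r+1)}=|z|^{1/r}$. Differentiating $U'=g(U)$ gives $U''=g'(U)U'$, and $g'(U)\sim (A-U)^{1/(r+1)-1}$. Hence $|U''|\lesssim |z|^{1/r-1}$ near $0$ when $r>1$, which yields a $1/r$-Hölder modulus for $U'$; when $r\le1$, $U''$ is bounded. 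Combining the edge and crest estimates, with the smooth interior in between, gives the exponent $\min\{\alpha-1,1/r,1\}$.

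\textbf{Integrability of $U'$.} When $\alpha<1$, $|U'|^q\sim d^{q(\alpha-1)}$ is integrable near the edge iff $q(1-\alpha)<1$. On the rest of the compact support $U'$ is bounded, so $U'\in L^q$ for $q<1/(1-\alpha)$. In particular $q=1$ is admissible, so $U\in W^{1,1}$; this also follows from monotonicity, since the total variation of $U$ is $2A$. Finally $U\notin W^{1,\infty}$ because $|U'|\to\infty$ by Theorem~\ref{thm:tw}(iv).

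The main obstacle is justifying the derivative asymptotics. Theorem~\ref{thm:tw} gives only $U\sim C_e d^\alpha(1+o(1))$, and an asymptotic relation cannot be differentiated without further information. I will therefore avoid differentiating it and instead obtain $U'$ and $U''$ from the exact relation $U'=-g(U)$ with the explicit $g$ above. This makes the edge and crest exponents direct consequences of power behaviour in $U$, with the $o(1)$ errors controlled because $g$ is an explicit product of powers.
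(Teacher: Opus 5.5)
Your proposal is correct and follows essentially the route the paper intends. The paper gives no separate proof of the corollary: it treats it as a direct read-off of the edge and crest asymptotics in Theorem~\ref{thm:tw}(iv)--(v), taking $U'\sim d^{\alpha-1}$ for granted as in the proof of Theorem~\ref{thm:weak}. Your derivation of the asymptotics of $U'$ and $U''$ from the exact relation $|U'|=g(U)$ given by the first integral \eqref{eq:firstint}, at both the free boundary and the crest, is a worthwhile tightening. It supplies the justification for differentiating the $o(1)$ expansions, which the paper leaves implicit.
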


Theorem~\ref{thm:weak} is a statement about the travelling wave only. It says nothing about uniqueness of weak solutions with compacton initial data, a question that is delicate already for $K(p,p)$ and that we do not address.

\section{Linearisation and exact spectrum}
\label{sec:lin}

\subsection{The linearised generator}

The travelling-wave and weak-solution results above hold for every $r>0$.  In this section we restrict to
\begin{equation}
\label{eq:spectralrange}
r\ge 1.
\end{equation}
This restriction is intrinsic to the conventional linearisation used below.  Indeed, the dispersive constitutive law $\Psi_r(q)=|q|^{r-1}q$ is differentiable at $q=0$ only for $r\ge1$: for $0<r<1$ the quotient $\Psi_r(\varepsilon h)/\varepsilon=\varepsilon^{r-1}|h|^{r-1}h$ diverges as $\varepsilon\downarrow0$.  Since $(U^p)'=0$ at the compacton crest, the spectral problem below is therefore not the linearisation of \eqref{eq:E} when $0<r<1$.  The nonlinear travelling waves in that range remain covered by Theorems~\ref{thm:tw} and~\ref{thm:weak}.

In the frame moving with the compacton, \eqref{eq:E} reads $u_t-cu_z+\partial_z\Fp(u)+\mu\partial_z^2\Psr(\partial_z\Fp(u))=0$. Let $u=U+v$ with $v$ small and supported in $(-\ell,\ell)$, where $U>0$. Setting $a(z)=\Fp'(U)=pU^{p-1}$, one has $\Fp(u)=W+av+O(v^2)$, and it is convenient to use the perturbation of the flux, $\eta=av$, as dependent variable. With $b(z)=\Psr'(W')=r|W'|^{r-1}$, the linearised equation is
\begin{equation}
\label{eq:linraw}
\partial_t\Bigl(\frac{\eta}{a}\Bigr)-c\,\partial_z\Bigl(\frac{\eta}{a}\Bigr)+\eta_z+\mu\,\partial_z^2\bigl(b\,\eta_z\bigr)=0 .
\end{equation}
Since $a$ does not depend on $t$, this can be written as
\begin{equation}
\label{eq:gen}
\eta_t=a\,\partial_z\Ss\eta,\qquad \Ss\eta=-\mu\bigl(b\,\eta_z\bigr)_z+\Bigl(\frac{c}{a}-1\Bigr)\eta .
\end{equation}
The operator $\Ss$ is formally symmetric in $L^2(-\ell,\ell)$ with respect to $\dd z$, while $a\partial_z$ is formally skew-symmetric in $L^2$ with respect to the weight $a^{-1}\dd z$; the two structures do not match, which reflects the absence of a Hamiltonian formulation. Differentiating \eqref{eq:int1} with respect to $z$ and using $U'=W'/a$ gives $\Ss W'=0$, so the translation mode belongs to the kernel of $\Ss$, as it must.

The coefficient $b$ is positive on $(-\ell,\ell)\setminus\{0\}$, vanishes at the crest like $|z|^{(r-1)/r}$ when $r>1$ by Theorem~\ref{thm:tw}(v), and vanishes at the free boundary like $d^{(p\alpha-1)(r-1)}$ when $r>1$; the potential $c/a-1$ tends to $+\infty$ at the free boundary like $d^{-(p-1)\alpha}$ and is bounded below on the support. Thus $\Ss$ is a singular Sturm--Liouville operator with a degenerate interior point and two singular endpoints.

\subsection{Reduction to a hypergeometric equation}

The key observation is that the change of independent variable $y=(U/A)^{p-1}$ of the proof of Theorem~\ref{thm:tw} also linearises the coefficients of $\Ss$. Indeed $a=pA^{p-1}y$, so that
\begin{equation}
\label{eq:cova}
\frac{c}{a}=\frac{p+1}{2p^2}\,\frac{1}{y},
\end{equation}
and a direct computation from the first integral \eqref{eq:firstint} gives
\begin{equation}
\label{eq:by}
\mu\,b\,(y_z)^2=D_{p,r}\,y(1-y),\qquad D_{p,r}=\frac{(r+1)(p-1)^2}{2p^2},
\end{equation}
together with $\mu\,y_z\,\partial_y(b\,y_z)=D_{p,r}\bigl[\fa-(\fa+\fb)y\bigr]$. Writing $\eta_z=y_z\eta_y$ and $(b\eta_z)_z=y_z\partial_y(by_z\eta_y)$, the eigenvalue equation $\Ss\eta=\lambda\eta$ becomes, on each half of the support,
\begin{equation}
\label{eq:hyp}
y(1-y)\eta_{yy}+\bigl[\fa-(\fa+\fb)y\bigr]\eta_y+\Bigl[\frac{\lambda+1}{D_{p,r}}-\frac{p+1}{(r+1)(p-1)^2}\,\frac1y\Bigr]\eta=0 .
\end{equation}
This is a Gauss hypergeometric equation with an additional $1/y$ term. The indicial equation at $y=0$ has the roots
\begin{equation}
\label{eq:indicial}
s_+=\gamma=\frac{p+1}{(p-1)(r+1)},\qquad s_-=-\frac{1}{p-1},
\end{equation}
and the substitution $\eta=y^{\gamma}f(y)$ transforms \eqref{eq:hyp} into $y(1-y)f''+[C-(A+B+1)y]f'-ABf=0$ with
\begin{equation}
\label{eq:hypparams}
C=\frac{pr+2p+1}{(p-1)(r+1)},\qquad A+B=S:=\frac{pr+p+2}{(p-1)(r+1)},\qquad AB=\gamma(S-\gamma)-\frac{\lambda+1}{D_{p,r}} .
\end{equation}
Note that $C=2\gamma+\fa$ and $C-A-B=1/(r+1)$, so the Frobenius exponents at $y=1$, the crest, are $0$ and $1/(r+1)$. The measure $\dd z$ corresponds to $\dd z=C_s\,y^{\fa-1}(1-y)^{\fb-1}\dd y$ by \eqref{eq:quad}, so that $L^2(-\ell,\ell)$ is isometric, up to the constant $C_s$, to the direct sum of two copies of $L^2\bigl((0,1),y^{\fa-1}(1-y)^{\fb-1}\dd y\bigr)$, one for each half of the support.

\subsection{The self-adjoint realisation}

Let
\begin{equation}
\label{eq:form}
q[\eta]=\int_{-\ell}^{\ell}\Bigl[\mu\,b\,|\eta_z|^2+\Bigl(\frac{c}{a}-1\Bigr)|\eta|^2\Bigr]\dd z,\qquad \eta\in C_c^\infty(-\ell,\ell).
\end{equation}
For $r\ge1$, the potential $V=c/a-1$ is bounded below and $b\ge0$.  Hence, after adding a sufficiently large multiple of $\|\eta\|_{L^2}^2$, the standard singular Sturm--Liouville form construction \cite{Zettl2005} gives a non-negative closable form.  We denote its closure by $\overline q$ and by $\Ss_F$ the self-adjoint operator associated with $\overline q$; equivalently, $\Ss_F$ is the Friedrichs extension of the corresponding minimal symmetric Sturm--Liouville operator.  Below we write simply $\Ss=\Ss_F$.

The form domain determines the transmission condition at the degenerate crest.  By Theorem~\ref{thm:tw}(v),
\[
 b(z)\asymp |z|^\kappa,\qquad \kappa=\frac{r-1}{r}\in[0,1)
\]
near $z=0$, so that $b^{-1}$ is locally integrable.  For smooth $\eta$ and $x,y$ close to the crest,
\begin{equation}
\label{eq:crestholder}
|\eta(x)-\eta(y)|^2
\le
\left(\int_x^y b|\eta_z|^2\dd z\right)
\left(\int_x^y b^{-1}\dd z\right).
\end{equation}
The second factor tends to zero with $|x-y|$, uniformly for sets bounded in the form norm.  By closure, every element of $D(\overline q)$ therefore has a continuous representative across the crest.  For $\eta$ in the operator domain, Green's identity applied with test functions crossing $z=0$ then gives the natural flux transmission condition
\begin{equation}
\label{eq:transmission}
[\eta]_{0^-}^{0^+}=0,\qquad [b\eta_z]_{0^-}^{0^+}=0.
\end{equation}

The Friedrichs form also selects the principal Frobenius branch at each free boundary.  In the variable $y$, the derivative part of the form is, up to a positive constant,
\[
 \int_0^1 y^{\fa}(1-y)^{\fb}\,|\eta_y|^2\dd y,
\]
and the singular part of the potential has the same integrability threshold.  A branch $\eta\sim y^s$ therefore has finite Friedrichs energy only if
\begin{equation}
\label{eq:formbranch}
\fa+2s>1.
\end{equation}
For $s=s_+=\gamma$ this holds because $\fa+2\gamma=C>1$, whereas
\begin{equation}
\label{eq:singularenergy}
\fa+2s_--1=-\frac{p+r+2}{(p-1)(r+1)}<0.
\end{equation}
Thus the singular branch is never in the Friedrichs form domain, even in parameter regimes in which it is square integrable.

Finally, the resolvent of $\Ss$ is compact.  Let $(\eta_j)$ be bounded in the shifted form norm.  On compact subintervals away from the free boundaries, \eqref{eq:crestholder} and the local integrability of $b^{-1}$ give a uniform modulus of continuity, including across the crest; together with the $L^2$ bound this yields local uniform boundedness, and Arzel\`a--Ascoli gives compactness in $L^2$ on every such subinterval.  Near either free boundary, $V(z)\to+\infty$, and therefore
\[
 \int_{\{d(z)<\delta\}}|\eta_j|^2\dd z
 \le
 \frac{1}{\inf_{\{d<\delta\}} V}
 \int_{\{d<\delta\}}V_+|\eta_j|^2\dd z
 \longrightarrow0
 \qquad(\delta\downarrow0)
\]
uniformly in $j$, since the positive part of the potential is controlled by the shifted form norm.  Local compactness plus this uniform endpoint tightness proves that $D(\overline q)$ embeds compactly in $L^2(-\ell,\ell)$ and hence that $\Ss$ has compact resolvent.

\begin{remark}[Limit-point and limit-circle endpoints]
\label{rem:lplc}
The branch $\eta\sim y^{s_-}$ belongs to $L^2(\dd z)$ near a free boundary if and only if $2s_-+\fa>0$, namely
\begin{equation}
\label{eq:lplcthreshold}
pr>2r+3.
\end{equation}
Thus for $p=2$ the free boundary is always limit-point.  When $p>2+3/r$ it is limit-circle, both Frobenius branches are square integrable and a self-adjoint realisation requires an endpoint condition.  The Friedrichs choice remains unambiguous: by \eqref{eq:singularenergy}, the singular branch has infinite form energy for every $p>1$ and $r\ge1$, even when it lies in $L^2$.  Hence the Friedrichs realisation always selects the principal branch $y^{s_+}$.  Theorem~\ref{thm:spectrum} refers to this realisation.
\end{remark}

\begin{theorem}[Exact spectrum of $\Ss$]
\label{thm:spectrum}
Let $p>1$, $r\ge1$, and let $\Ss$ be the Friedrichs realisation associated with \eqref{eq:form}. Then $\Ss$ has compact resolvent, and its spectrum consists of the two sequences
\begin{align}
\label{eq:even}
\lambda_n^{(+)}&=-\frac{(p-1)(pr+2p+1)}{2p^2(r+1)}+\frac{(p-1)\bigl[p(r+1)+2\bigr]}{2p^2}\,n+\frac{(r+1)(p-1)^2}{2p^2}\,n^2,\\
\label{eq:odd}
\lambda_n^{(-)}&=\frac{(p-1)(r+3)}{2p}\,n+\frac{(r+1)(p-1)^2}{2p^2}\,n^2,\qquad n=0,1,2,\dots,
\end{align}
with even and odd eigenfunctions respectively,
\begin{align}
\label{eq:evenfun}
\eta_n^{(+)}(z)&=y^{\gamma}\,{}_2F_1\bigl(-n,\,S+n;\,C;\,y\bigr),\\
\label{eq:oddfun}
\eta_n^{(-)}(z)&=\operatorname{sgn}(z)\,y^{\gamma}(1-y)^{1/(r+1)}\,{}_2F_1\bigl(-n,\,C+n+\tfrac{1}{r+1};\,C;\,y\bigr),
\end{align}
where $y=y(z)$ is the profile variable of \eqref{eq:profile}. All eigenvalues are simple, they do not depend on $c$ or $\mu$, and
\begin{equation}
\label{eq:index}
\lambda_0^{(+)}<0=\lambda_0^{(-)}<\lambda_1^{(+)}<\lambda_1^{(-)}<\cdots,\qquad \ker\Ss=\operatorname{span}\{W'\}.
\end{equation}
In particular $\Ss$ has exactly one negative eigenvalue.
\end{theorem}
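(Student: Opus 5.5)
The plan is to exploit the symmetry $z\mapsto -z$ of $\Ss$ (both $b$ and $V$ are even) and split $L^2(-\ell,\ell)$ into even and odd subspaces. Each subspace is reduced to a Sturm--Liouville problem on one half-support, say $(0,\ell)$, which maps to $y\in(0,1)$. The transmission conditions \eqref{eq:transmission} become a condition at the crest $y=1$. For even functions, $[b\eta_z]=0$ forces $b\eta_z\to0$ at $z=0^+$. For odd functions, continuity forces $\eta(0^+)=0$.

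Compactness of the resolvent has already been established before the statement. Hence the spectrum is discrete, and on each half it is the set of $\lambda$ for which \eqref{eq:hyp} has a solution with the admissible behaviour at both ends.
\begin{itemize}
\item At $y=0$, Remark~\ref{rem:lplc} and \eqref{eq:formbranch} show that the Friedrichs realisation selects the branch $y^{\gamma}$. This gives $\eta=y^\gamma f$ with $f=\,_2F_1(A,B;C;y)$, the solution analytic at $0$.
\item At $y=1$, the Frobenius exponents are $0$ and $1/(r+1)$. I would check how each branch behaves under the crest conditions, using $b\,y_z^2\propto y(1-y)$ from \eqref{eq:by} together with $y_z\sim (1-y)^{1/(r+1)}\cdot$const. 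The latter follows from $1-y\asymp |z|^{r/(r+1)}$, by Theorem~\ref{thm:tw}(v).
\end{itemize}
The crest computation then goes as follows. One finds $b\eta_z=b y_z\eta_y\propto (1-y)^{r/(r+1)}\,\eta_y$ up to constants.
\begin{itemize}
\item The branch $(1-y)^0$, analytic at $y=1$, has $b\eta_z\to0$ and $\eta(0)\ne0$ generically. It is therefore the even branch.
\item The branch $(1-y)^{1/(r+1)}g(y)$ vanishes at the crest. It has $b\eta_z\to$ a nonzero constant, so its odd extension has continuous flux, and it is the odd branch.
\end{itemize}

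Next I would impose that the solution regular at $y=0$ coincides with the prescribed branch at $y=1$. By the connection formula for $_2F_1$ at $1$, the coefficient of the unwanted branch is proportional to $1/(\Gamma(C-A)\Gamma(C-B))$ in the even case. In the odd case the analogous coefficient is $1/(\Gamma(A)\Gamma(B))$. This coefficient must vanish.

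For the even case, $C>0$ and $C-A-B=1/(r+1)>0$. The condition becomes $A=-n$ or, symmetrically, $C-A=-n$. The admissible choice consistent with $A+B=S$ gives the polynomial $_2F_1(-n,S+n;C;y)$, so $AB=-n(S+n)$. Solving $\gamma(S-\gamma)-(\lambda+1)/D_{p,r}=-n(S+n)$ yields \eqref{eq:even}. I would verify this by computing the $n=0$ constant $D_{p,r}\gamma(S-\gamma)-1$ against the stated value.

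For the odd case, I would rewrite the equation via the Euler transformation $f=(1-y)^{C-A-B}\tilde f$, where $\tilde f={}_2F_1(C-A,C-B;C;y)$. The condition becomes $C-A=-n$, giving $\tilde f={}_2F_1(-n,C+n+\tfrac1{r+1};C;y)$, which is \eqref{eq:oddfun}. The eigenvalue relation $AB=(C+n)(S-C-n)$ then produces \eqref{eq:odd}. At $n=0$ this gives $\lambda=0$, matching $\Ss W'=0$ and confirming $W'\propto\operatorname{sgn}(z)y^\gamma(1-y)^{1/(r+1)}$.

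The bookkeeping remains to be closed:
\begin{itemize}
\item \emph{Independence of $c,\mu$.} This is immediate, since they enter only through $D_{p,r}$ and the constants in \eqref{eq:cova}, which involve only $p$ and $r$.
\item \emph{Simplicity.} Within each parity class, regularity at $y=0$ leaves a one-dimensional solution space. The even and odd sequences are disjoint by the interlacing.
\item \emph{Interlacing \eqref{eq:index}.} This follows from direct comparison of the quadratics: $\lambda^{(-)}_n-\lambda^{(+)}_n>0$ and $\lambda^{(+)}_{n+1}-\lambda^{(-)}_n>0$ for all $n$, both of which are elementary inequalities in $p$ and $r$. It also follows from Sturm oscillation, since $\eta_n^{(\pm)}$ has the expected number of zeros.
\item \emph{Sign of $\lambda_0^{(+)}$.} This is negative by inspection, so there is exactly one negative eigenvalue, and $\ker\Ss=\operatorname{span}\{W'\}$.
\end{itemize}

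\emph{Completeness.} I must show that no other eigenvalues exist. This holds because any $L^2$ eigenfunction is even or odd, and the connection-coefficient argument shows that the polynomial cases are the only ones satisfying both endpoint conditions.

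The main obstacle is justifying rigorously that the Friedrichs operator domain near the crest is exactly characterised by the two branch conditions above. The key case is the even branch: the $(1-y)^{1/(r+1)}$ component must be excluded there, not merely shown to have finite form energy. I would handle this by integrating Green's identity against test functions crossing $z=0$. This step is also what recovers \eqref{eq:transmission}.
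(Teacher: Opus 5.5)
Your overall route is the paper's: parity splitting, Friedrichs selection of $y^{\gamma}$ at the free boundary, the crest dichotomy (the analytic branch has $b\eta_z\to0$, while the $(1-y)^{1/(r+1)}$ branch has nonzero limiting flux), quantisation by vanishing of a connection coefficient, and interlacing by comparing the quadratics. However, the connection-coefficient step, which is exactly what carries completeness, is stated backwards. In
\[
{}_2F_1(A,B;C;y)=\frac{\Gamma(C)\Gamma(C-A-B)}{\Gamma(C-A)\Gamma(C-B)}\,f_1+\frac{\Gamma(C)\Gamma(A+B-C)}{\Gamma(A)\Gamma(B)}\,f_2,
\]
the factor $1/(\Gamma(C-A)\Gamma(C-B))$ multiplies the analytic branch $f_1$. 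That is the \emph{wanted} branch in the even sector. The factor $1/(\Gamma(A)\Gamma(B))$ multiplies the branch $f_2=(1-y)^{1/(r+1)}g$, which is the wanted one in the odd sector. Taken literally, your assignment quantises the even sector by $C-A\in-\N_0$ and the odd sector by $A\in-\N_0$, so it swaps the two sequences.

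The confusion is visible in your even case, where $C-A=-n$ is offered as the symmetric alternative to $A=-n$. By Euler's transformation, $C-A=-n$ gives $f=(1-y)^{1/(r+1)}\times$(polynomial). This is a pure $f_2$ solution with $b\eta_z(0^+)\neq0$, so it is precisely what the even transmission condition excludes. The correct symmetric alternative is $B=-n$, which yields the same eigenvalues. Your final formulas come out right only because you silently switched to the correct conditions: $A=-n$ for even, and $C-A=-n$ via Euler for odd.

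The fix is as follows. In the even sector, require the $f_2$ coefficient to vanish. Since $C>0$ and $A+B-C=-1/(r+1)\notin-\N_0$, the factor $\Gamma(C)\Gamma(A+B-C)$ is finite and nonzero, so this happens if and only if $A\in-\N_0$ or $B\in-\N_0$. For real $\lambda$, a complex-conjugate pair $A,B$ never gives a zero. In the odd sector, require the $f_1$ coefficient to vanish. Since $\Gamma(C)\Gamma(1/(r+1))\neq0$, this happens if and only if $C-A\in-\N_0$ or $C-B\in-\N_0$. With this, the ``only if'' direction needed for completeness is justified, and your argument coincides with the paper's.

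There is also a smaller slip. Theorem~\ref{thm:tw}(v) gives $1-y\asymp|z|^{1/\fb}=|z|^{(r+1)/r}$, not $|z|^{r/(r+1)}$. With your exponent, $y_z$ would blow up like $(1-y)^{-1/r}$ instead of vanishing like $(1-y)^{1/(r+1)}$. The conclusion you actually use is correct, and it follows directly from $\dd z=C_s\,y^{\fa-1}(1-y)^{\fb-1}\dd y$.
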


\begin{proof}
The self-adjointness, transmission conditions, endpoint selection and compactness of the resolvent were established above.  Hence $\Ss$ has a complete orthonormal set of eigenfunctions.  Since the coefficients are even, the even and odd subspaces reduce $\Ss$, and every eigenfunction can be chosen to have definite parity. Any eigenfunction is, on each half of the support, a solution of \eqref{eq:hyp} in the form domain, hence of the form $y^{\gamma}f$ with $f$ a solution of the hypergeometric equation with parameters \eqref{eq:hypparams} that is regular at $y=0$, by the Friedrichs selection of the principal branch. Near the crest, $f$ is a combination of the two Frobenius solutions, $f_1$ analytic at $y=1$ with $f_1(1)=1$ and $f_2=(1-y)^{1/(r+1)}g$ with $g$ analytic at $y=1$ and $g(1)=1$. Using \eqref{eq:by}, one computes that $b\,\partial_z\bigl(y^\gamma f_1\bigr)\to0$ as $z\to0$, whereas $b\,\partial_z\bigl(y^\gamma f_2\bigr)$ tends to a finite nonzero limit, of opposite signs on the two sides. Even functions have odd flux $b\eta_z$, so the transmission condition $[b\eta_z]=0$ forces the coefficient of $f_2$ to vanish; odd functions vanish at the crest, so the coefficient of $f_1$ must vanish. Hence even eigenfunctions correspond to solutions of the hypergeometric equation that are regular at both $y=0$ and $y=1$, which requires $A=-n$ or $B=-n$ for some $n\in\N_0$, and odd eigenfunctions correspond to the branch $(1-y)^{C-A-B}$ at $y=1$ with regular remaining factor, which requires $C-A=-n$ or $C-B=-n$. Solving \eqref{eq:hypparams} for $\lambda$ under $A=-n$, $B=S+n$ gives \eqref{eq:even}, and under $A=C+n$, $B=S-C-n$ gives \eqref{eq:odd}; the corresponding eigenfunctions are \eqref{eq:evenfun} and \eqref{eq:oddfun}. Because the resolvent is compact, self-adjoint spectral theory supplies a complete eigenbasis.  The preceding parity argument classifies every possible eigenfunction: in the even sector the transmission condition eliminates the second crest branch and the connection coefficient vanishes only when $A=-n$ or $B=-n$; in the odd sector continuity eliminates the first crest branch and the corresponding connection coefficient vanishes only when $C-A=-n$ or $C-B=-n$.  Consequently \eqref{eq:even}--\eqref{eq:oddfun} exhaust the spectrum.  Equivalently, after the factor $y^\gamma$ (and, in the odd sector, $(1-y)^{1/(r+1)}$) is removed, the eigenfunctions reduce to Jacobi polynomials with admissible endpoint exponents.

The independence of $c$ and $\mu$ is manifest in \eqref{eq:even} and \eqref{eq:odd}.  The strict interlacing in \eqref{eq:index} follows directly from
\begin{align}
\label{eq:interlace1}
\lambda_n^{(-)}-\lambda_n^{(+)}
&=\frac{p-1}{2p^2(r+1)}
\Bigl[2n(p-1)(r+1)+pr+2p+1\Bigr]>0,\\
\label{eq:interlace2}
\lambda_{n+1}^{(+)}-\lambda_n^{(-)}
&=\frac{r(p-1)}{2p^2(r+1)}
\Bigl[2n(p-1)(r+1)+2pr+3p-r\Bigr]>0,
\end{align}
for every $n\ge0$.  In particular $\lambda_0^{(+)}<\lambda_0^{(-)}=0<\lambda_1^{(+)}$, and all eigenvalues are simple. Finally, for $n=0$ the odd eigenfunction is $\operatorname{sgn}(z)\,y^\gamma(1-y)^{1/(r+1)}$, which by \eqref{eq:firstint} is proportional to $W'$, in agreement with $\Ss W'=0$; this identifies the kernel and provides an internal consistency check of the reduction.
\end{proof}

\begin{corollary}[The case $p=2$]
\label{cor:p2}
For $p=2$ one has $\lambda_0^{(+)}=-(2r+5)/(8(r+1))$, $\lambda_1^{(+)}=3r(r+2)/(8(r+1))$, $\lambda_n^{(-)}=(r+3)n/4+(r+1)n^2/8$, and the first four eigenvalues are $-7/16$, $0$, $9/16$, $5/4$ for $r=1$; $-11/32$, $0$, $45/32$, $2$ for $r=3$; and $-5/16$, $0$, $35/16$, $11/4$ for $r=5$.
\end{corollary}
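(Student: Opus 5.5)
The corollary follows by direct substitution of $p=2$ into the closed-form eigenvalues \eqref{eq:even} and \eqref{eq:odd} of Theorem~\ref{thm:spectrum}. The ordering of the first four eigenvalues is then read off from the interlacing \eqref{eq:index}; no new analysis is needed.

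First, with $p=2$ one has $p-1=1$ and $2p^2=8$, so the common quadratic coefficient $(r+1)(p-1)^2/(2p^2)$ becomes $(r+1)/8$. In the even sequence, the constant term becomes $-(2r+4+1)/(8(r+1))=-(2r+5)/(8(r+1))$, which is the stated $\lambda_0^{(+)}$. The linear coefficient becomes $(2(r+1)+2)/8=(r+2)/4$. In the odd sequence, the linear coefficient $(p-1)(r+3)/(2p)$ becomes $(r+3)/4$, which gives the stated formula for $\lambda_n^{(-)}$ at once.

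Next I compute $\lambda_1^{(+)}=-(2r+5)/(8(r+1))+(r+2)/4+(r+1)/8$ over the common denominator $8(r+1)$. The numerator is
\[
-(2r+5)+2(r+2)(r+1)+(r+1)^2 = 3r^2+6r = 3r(r+2),
\]
which gives $\lambda_1^{(+)}=3r(r+2)/(8(r+1))$. I also record $\lambda_1^{(-)}=(r+3)/4+(r+1)/8=(3r+7)/8$.

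By \eqref{eq:index} the four smallest eigenvalues are $\lambda_0^{(+)}<0=\lambda_0^{(-)}<\lambda_1^{(+)}<\lambda_1^{(-)}$. Substituting the three values of $r$:
\begin{itemize}
\item for $r=1$: $-7/16$, $0$, $9/16$, $10/8=5/4$;
\item for $r=3$: $-11/32$, $0$, $45/32$, $16/8=2$;
\item for $r=5$: $-15/48=-5/16$, $0$, $105/48=35/16$, $22/8=11/4$.
\end{itemize}

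The only step that needs any care is the reduction of $\lambda_1^{(+)}$ to $3r(r+2)/(8(r+1))$, which is the polynomial identity checked above. Everything else is immediate from Theorem~\ref{thm:spectrum}.
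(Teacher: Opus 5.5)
Your proposal is correct and takes essentially the paper's route: the corollary follows by setting $p=2$ in \eqref{eq:even}--\eqref{eq:odd} and ordering the first four eigenvalues with the interlacing \eqref{eq:index}. Your arithmetic is also correct, including the numerator $3r(r+2)$ for $\lambda_1^{(+)}$, the value $\lambda_1^{(-)}=(3r+7)/8$, and the twelve numerical values for $r=1,3,5$.
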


The negative eigenvalue decreases in absolute value as $r$ grows while the positive gap increases, so nothing in $\Ss$ distinguishes large $r$ as more unstable. We stress again that $\Ss$ is not the generator of the linearised flow. The generator $a\partial_z\Ss$ is not self-adjoint in any obvious weighted space, and its spectral analysis is outside the scope of this paper; see Section~\ref{sec:disc}.

\section{Numerical verification and dynamical diagnostics}
\label{sec:num}

All computations in this section are reproducible with the scripts and data deposited at \url{https://github.com/FrancisRVillatoro/gradient-power-compactons}.  The exact release used for this manuscript is version 1.0.0, archived on Zenodo under DOI \href{https://doi.org/10.5281/zenodo.22671976}{10.5281/zenodo.22671976}.  The calculations use double precision and standard scientific-library routines (SciPy) for beta functions and sparse eigenvalue problems.

\subsection{Profiles, widths and edge exponents}

The profile \eqref{eq:profile} was evaluated for $p=2$, $c=\mu=1$ and $r=1,3,5$ by inverting the regularised incomplete beta function on a fine monotone grid. The half-widths obtained from \eqref{eq:width} are $\ell=2\pi$, $\ell=3.277905$ and $\ell=2.685137$, and a direct numerical quadrature of \eqref{eq:quad} reproduces them to six digits. A least-squares fit of $\log U$ against $\log d$ for $d\in[10^{-6},10^{-3}]$ gives slopes $1.9942$, $0.7998$ and $0.6666$, against the exact exponents $2$, $4/5$ and $2/3$ of \eqref{eq:edge}; the small deviation for $r=1$ is the expected effect of the $o(1)$ correction in \eqref{eq:edge} on the wider fitting window, and it decreases as the window is moved towards the boundary. Figure~\ref{fig:profiles} shows the profiles and the edge behaviour.

\subsection{Spectrum of the Sturm--Liouville factor}

The operator $\Ss$ of \eqref{eq:gen} was discretised for $p=2$, $r=3$, $c=\mu=1$ on a uniform grid of $N$ points in $[-\ell,\ell]$ by a conservative second-order finite-difference scheme, with the coefficient $b$ evaluated at the midpoints, homogeneous Dirichlet conditions at the endpoints and no special treatment of the crest. The eight smallest eigenvalues were computed by shift-and-invert Lanczos iteration and classified by parity. Table~\ref{tab:spectrum} compares them with \eqref{eq:even} and \eqref{eq:odd}.

\begin{table}[t]
\centering
\caption{Eigenvalues of $\Ss$ for $p=2$, $r=3$, $c=\mu=1$ computed by finite differences with $N$ grid points, against the exact values of Theorem~\ref{thm:spectrum}. Even eigenvalues are marked with $+$ and odd ones with $-$.}
\label{tab:spectrum}
\small\begin{tabular}{@{}lrrrrrrrr@{}}
\toprule
 & $\lambda_0^{(+)}$ & $\lambda_0^{(-)}$ & $\lambda_1^{(+)}$ & $\lambda_1^{(-)}$ & $\lambda_2^{(+)}$ & $\lambda_2^{(-)}$ & $\lambda_3^{(+)}$ & $\lambda_3^{(-)}$\\
\midrule
exact & $-0.34375$ & $0$ & $1.40625$ & $2$ & $4.15625$ & $5$ & $7.90625$ & $9$\\
$N=6001$ & $-0.34375$ & $0.01672$ & $1.40628$ & $2.03843$ & $4.15642$ & $5.06525$ & $7.90682$ & $9.09674$\\
$N=12001$ & $-0.34375$ & $0.01364$ & $1.40626$ & $2.03128$ & $4.15632$ & $5.05296$ & $7.90647$ & $9.07825$\\
$N=24001$ & $-0.34375$ & $0.01146$ & $1.40625$ & $2.02623$ & $4.15628$ & $5.04435$ & $7.90633$ & $9.06540$\\
$N=48001$ & $-0.34375$ & $0.00997$ & $1.40625$ & $2.02280$ & $4.15626$ & $5.03851$ & $7.90628$ & $9.05673$\\
\bottomrule
\end{tabular}
\end{table}

The even eigenvalues converge rapidly towards the exact values, including the negative eigenvalue $-11/32$. The odd eigenvalues converge from above at a substantially slower rate, and the available refinements do not establish a definite asymptotic order for either parity;  the numerical approximation of the zero eigenvalue is a small positive number that decreases with $N$. This behaviour is explained by \eqref{eq:oddfun}. The odd eigenfunctions contain the factor $(1-y)^{1/(r+1)}\sim|z|^{1/r}$ at the crest, so their derivative is unbounded there, and a uniform grid with no adaptation resolves the corresponding flux poorly; the even eigenfunctions are smooth at the crest. A discretisation in the variable $y$ with Jacobi weights, or a graded mesh at the crest, removes this effect, and we regard the slow drift of the odd eigenvalues towards their exact values as an independent confirmation of the crest regularity predicted by Theorem~\ref{thm:tw}(v) rather than as a numerical anomaly. In particular, the small positive value obtained for $\lambda_0^{(-)}$ must not be read as a positive eigenvalue of $\Ss$.

\subsection{Time-dependent propagation, Fourier regularity and the generator}
\label{sec:time}

We finally test whether the exact cusped profiles can be propagated by a direct spectral discretisation of \eqref{eq:E}, and whether the loss of regularity predicted in Section~\ref{sec:tw} is visible in Fourier space.  We set $p=2$, $c=\mu=1$ and consider $r=3$ and $r=5$ on the periodic interval $[-30,30)$.  To avoid confusing aliasing with the intrinsic low regularity of the compacton, the time-dependent calculation is formulated as a padded Fourier--Galerkin method.  On an $N$-point grid we retain the modes $|k|\le K=\lfloor N/3\rfloor$ and denote the corresponding projection by $P_K$.  The state is initialised as $u_N(0)=P_KU$.  Each nonlinear flux is evaluated on an oversampled grid of $M=3N$ points and projected back to the retained space:
\begin{equation}
 \dot u_N=-D_N P_KF_2(u_N)-D_N^2P_K\Psi_r\!\left(D_NP_KF_2(u_N)\right).
 \label{eq:semi-discrete}
\end{equation}
Here the padded evaluation is applied both to $F_2(u)=|u|u$ and to $\Psi_r(q)=|q|^{r-1}q$.  For the polynomial inner powers $r=3,5$, the factor $M/N=3$ is sufficient to prevent unresolved products from aliasing into the retained band; because $F_2$ is non-polynomial once small Gibbs undershoots occur, we also checked convergence with respect to the padding factor.  At $N=384$ and $T=0.1$, changing $M/N$ through $2,2.5,3,4,6$ changes the dynamic relative $L^2$ error by only $6.1\times10^{-6}$ for $r=3$ and $1.6\times10^{-6}$ for $r=5$.  No hyperviscosity or post-step filtering is used.

Time integration is classical fourth-order Runge--Kutta with
\begin{equation}
 \Delta t=5\times10^{-5}\left(\frac{256}{N}\right)^3,
 \label{eq:dt}
\end{equation}
consistent with the third-order dispersive scaling.  Since a nonsmooth exact profile cannot be represented spectrally without error, we separate representation and evolution errors.  At each output time we determine the scalar phase correction $\delta$ that minimises
\[
 \|u_N-P_KU(\,\cdot-t-\delta)\|_2,
\]
and define
\begin{equation}
 E_{\rm dyn}=\frac{\|u_N-P_KU(\cdot-t-\delta)\|_2}{\|P_KU\|_2},\qquad
 E_{\rm tot}=\frac{\|u_N-U(\cdot-t-\delta)\|_2}{\|U\|_2}.
 \label{eq:dyn-errors}
\end{equation}
The corresponding representation error is obtained by replacing $u_N$ with $P_KU$ in the second expression.  This decomposition prevents the algebraic Fourier tail of the exact cusp from being counted as a time-integration error.  We also monitor the invariants $M$ and $Q$ of Section~\ref{sec:family} and the phase correction itself.

A temporal-refinement test at $N=384$ and $T=0.1$ shows that the error is spatially dominated.  Replacing \eqref{eq:dt} successively by $\Delta t/2$ and $\Delta t/4$ changes $E_{\rm dyn}$ by only $1.1\times10^{-9}$ for $r=3$ and $2.4\times10^{-9}$ for $r=5$.  With the baseline time step one obtains
\[
 E_{\rm dyn}=9.886\times10^{-3}\quad(r=3),\qquad
 E_{\rm dyn}=7.142\times10^{-3}\quad(r=5),
\]
while the relative drifts of $Q$ are $-1.87\times10^{-7}$ and $2.74\times10^{-8}$, respectively, and the mass drift is at roundoff level.  Table~\ref{tab:propagation} gives a spatial refinement at the same final time.  The errors are not monotone enough to assign an empirical convergence order, which is unsurprising for a uniform Fourier grid applied to a compactly supported profile with algebraic endpoint singularities, but both the dynamic and total errors decrease overall.

\begin{table}[t]
\centering
\caption{Phase-aligned propagation errors at $T=0.1$ for the padded Fourier--Galerkin scheme \eqref{eq:semi-discrete}.  $E_{\rm dyn}$ compares within the retained spectral space, whereas $E_{\rm tot}$ also contains the representation error of the exact compacton.}
\label{tab:propagation}
\begin{tabular}{@{}rcccc@{}}
\toprule
& \multicolumn{2}{c}{$r=3$} & \multicolumn{2}{c}{$r=5$}\\
$N$ & $E_{\rm dyn}$ & $E_{\rm tot}$ & $E_{\rm dyn}$ & $E_{\rm tot}$\\
\midrule
128 & $3.114\times10^{-2}$ & $3.646\times10^{-2}$ & $2.898\times10^{-2}$ & $4.145\times10^{-2}$\\
256 & $7.500\times10^{-3}$ & $8.516\times10^{-3}$ & $1.539\times10^{-2}$ & $2.192\times10^{-2}$\\
384 & $9.886\times10^{-3}$ & $1.023\times10^{-2}$ & $7.142\times10^{-3}$ & $1.112\times10^{-2}$\\
512 & $5.671\times10^{-3}$ & $6.602\times10^{-3}$ & $8.244\times10^{-3}$ & $1.007\times10^{-2}$\\
\bottomrule
\end{tabular}
\end{table}

Longer integrations at $N=384$ remain coherent up to $T=1$.  The maximum dynamic errors recorded at ten equally spaced times are $1.10\times10^{-2}$ for $r=3$ and $1.46\times10^{-2}$ for $r=5$, with final values $7.20\times10^{-3}$ and $1.02\times10^{-2}$; the corresponding maximum total errors are $1.21\times10^{-2}$ and $1.61\times10^{-2}$.  The maximum absolute phase corrections are $3.20\times10^{-3}$ and $9.16\times10^{-3}$, with final corrections $-3.20\times10^{-3}$ and $-4.49\times10^{-3}$.  A least-squares fit of $\delta(t)$ over the sampled interval corresponds to effective speed biases $-1.17\times10^{-3}$ and $-5.99\times10^{-3}$ relative to $c=1$, respectively; the phase oscillates, so these slopes are used only as diagnostics.  The maximum absolute drift of $Q$ is $2.97\times10^{-7}$ for $r=3$ and $1.52\times10^{-6}$ for $r=5$, while the maximum mass drifts are $2.7\times10^{-13}$ and $1.9\times10^{-12}$.  Small negative Gibbs undershoots remain present near the free boundaries (approximately $-2.7\times10^{-2}$ and $-5.1\times10^{-2}$ at this resolution), so the calculation is a propagation and consistency test rather than evidence for positivity preservation or nonlinear stability.  Figures~\ref{fig:time-error} and~\ref{fig:phase-error} show the shape and phase diagnostics.

\begin{figure}[t]
\centering
\includegraphics[width=.72\linewidth]{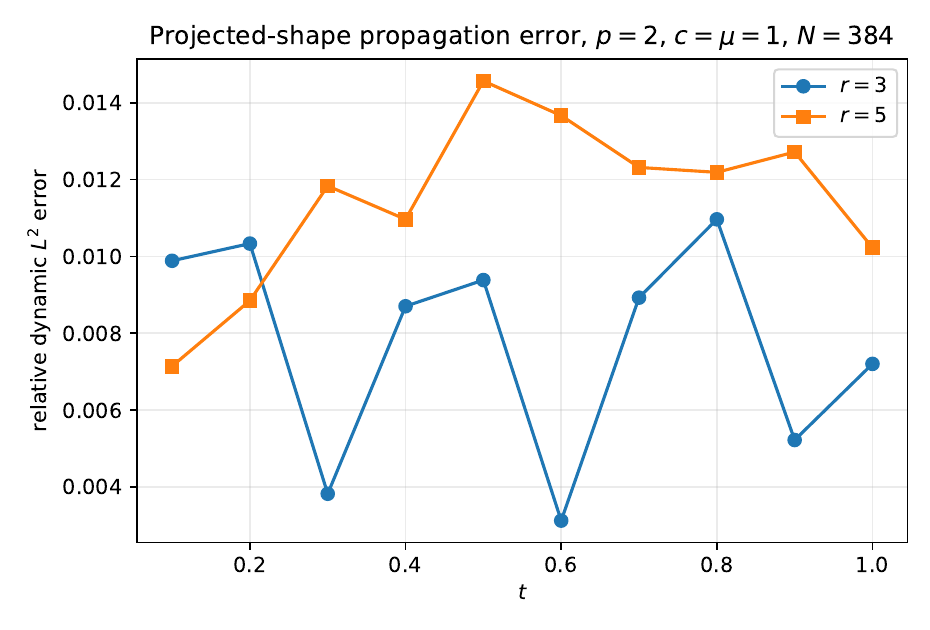}
\caption{Relative dynamic $L^2$ error $E_{\rm dyn}$ for $p=2$, $c=\mu=1$, $N=384$ and the padded Fourier--Galerkin discretisation.  The compactons remain coherent for $0\le t\le1$ at percent-level error.}
\label{fig:time-error}
\end{figure}

\begin{figure}[t]
\centering
\includegraphics[width=.72\linewidth]{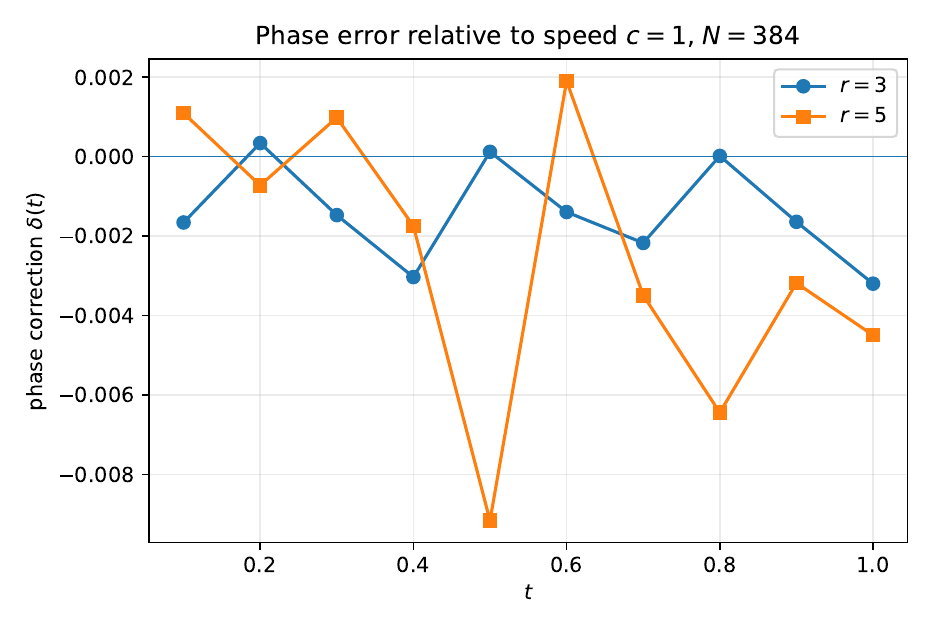}
\caption{Phase correction $\delta(t)$ used in \eqref{eq:dyn-errors}.  Its magnitude remains below $3.2\times10^{-3}$ for $r=3$ and $9.2\times10^{-3}$ for $r=5$ over the unit-time calculation.}
\label{fig:phase-error}
\end{figure}

The absence of spectral convergence can be quantified independently of the time integrator.  In fact, the edge expansion in Theorem~\ref{thm:tw}(iv) already predicts the leading Fourier asymptotic.  For an even compacton with
\[
 U(x)=C_e(\ell-|x|)^\alpha+o((\ell-|x|)^\alpha)
\]
at both edges, standard endpoint asymptotics give, for large positive wavenumber $k$,
\begin{equation}
 \widehat U(k)=2C_e\Gamma(\alpha+1)k^{-\alpha-1}
 \cos\!\left(k\ell-\frac{\pi(\alpha+1)}{2}\right)+o(k^{-\alpha-1}),
 \label{eq:fourier-asymptotic}
\end{equation}
up to the Fourier-normalisation convention.  Thus the coefficient envelope is $O(k^{-1-\alpha})$ and the $L^2$ projection tail is $O(N^{-\alpha-1/2})$.  The crest singularity, $A-U=O(|z|^{1+1/r})$, instead contributes the faster decay $O(k^{-2-1/r})$, so the free boundary controls the Fourier tail for the cases considered here.

Sampling the exact profiles on a grid of $2^{18}$ points and fitting the Fourier coefficients over $500\le k\le5000$ gives
\begin{equation}
 |\widehat U_k|\sim k^{-1.80002}\quad (r=3),
 \qquad
 |\widehat U_k|\sim k^{-1.66641}\quad (r=5).
 \label{eq:fourier-fits}
\end{equation}
These agree with $-1-\alpha=-9/5$ and $-5/3$.  The fitted slopes are stable under changes of the fitting window: using $250$--$2500$, $500$--$5000$, $1000$--$10000$ and $2000$--$20000$ gives values between $-1.8078$ and $-1.7999$ for $r=3$, and between $-1.6725$ and $-1.6642$ for $r=5$.  The corresponding relative $L^2$ Fourier-projection errors, computed directly from the tail of the finely sampled spectrum, scale as
\begin{equation}
 E_N^{\rm proj}\sim N^{-1.2978}\quad(r=3),
 \qquad
 E_N^{\rm proj}\sim N^{-1.1648}\quad(r=5),
 \label{eq:projection-rates}
\end{equation}
again in agreement with $N^{-13/10}$ and $N^{-7/6}$.  Figure~\ref{fig:fourier-decay} shows the coefficient decay.  Thus the slower Fourier convergence for $r=5$ is an intrinsic consequence of the exact edge regularity rather than a failure of the time stepper.

\begin{figure}[t]
\centering
\begin{minipage}[t]{0.485\linewidth}
  \centering
  \includegraphics[width=\linewidth]{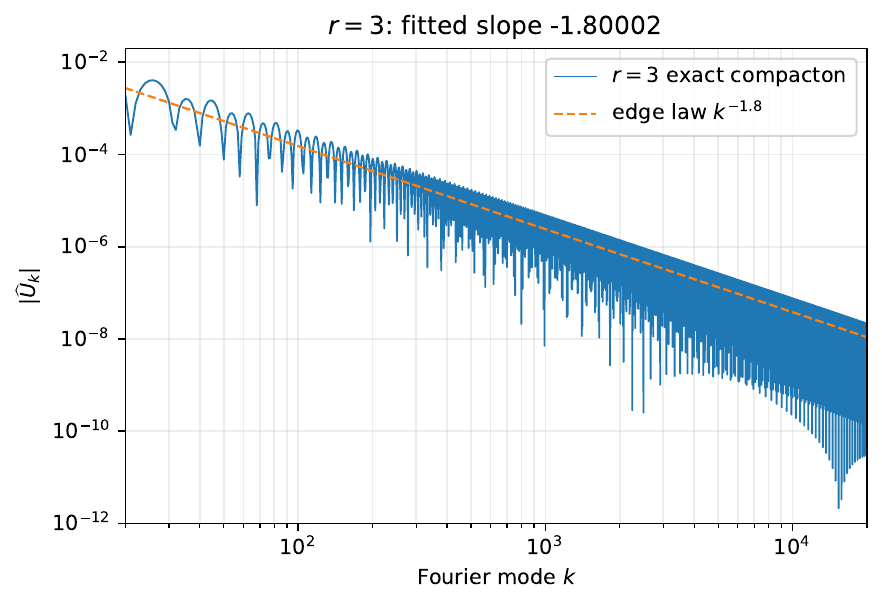}\\[-1mm]
  {\small (a) $r=3$}
\end{minipage}\hfill
\begin{minipage}[t]{0.485\linewidth}
  \centering
  \includegraphics[width=\linewidth]{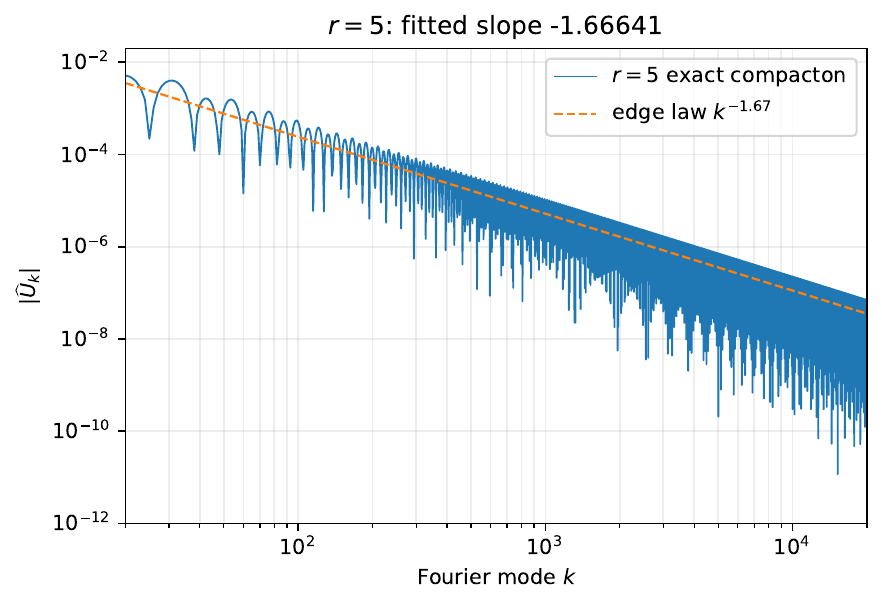}\\[-1mm]
  {\small (b) $r=5$}
\end{minipage}
\caption{Fourier coefficients of the exact $p=2$ compactons shown separately to make the algebraic tails visible.  The fitted slopes are $-1.80002$ for $r=3$ and $-1.66641$ for $r=5$, in agreement with the endpoint asymptotic \eqref{eq:fourier-asymptotic}.  The free-boundary singularity therefore controls the high-frequency tail in both cases.}
\label{fig:fourier-decay}
\end{figure}

As a separate linear diagnostic we project the generator
\begin{equation}
 \mathcal A=a\,\partial_z\Ss
 \label{eq:generator-num}
\end{equation}
onto the exact Jacobi eigenfunctions of $\Ss$ derived in Theorem~\ref{thm:spectrum}, retaining the same number of even and odd modes and evaluating the coupling integrals by high-order Gaussian quadrature.  This basis is deliberately structure adapted, so the calculation is used only as a diagnostic.  The resulting finite matrices are not self-adjoint.  Nevertheless, for matrix dimensions between $12$ and $48$ every computed non-zero eigenvalue occurs numerically in a pair $\pm i\omega$, and the largest real part remains at floating-point roundoff.  At dimension $48$ we obtain
\begin{equation}
 \max_{\lambda\in\sigma(\mathcal A_{48})}|\Re\lambda|
 =1.32\times10^{-13}\quad(r=3),
 \qquad
 =4.55\times10^{-13}\quad(r=5).
 \label{eq:generator-realparts}
\end{equation}
For example, the first three positive imaginary parts are $(2.236,10.214,25.876)$ for $r=3$ and $(4.393,20.602,53.261)$ for $r=5$ at the finest truncation.  Figure~\ref{fig:generator-realparts} displays the real-part diagnostic under refinement.  This concerns a truncated representation of the point spectrum only: it is evidence for modal neutrality within this structure-adapted Galerkin calculation, not a spectral-stability theorem, and it gives no information about possible high-frequency transient amplification associated with the degenerate coefficients.

\begin{figure}[t]
\centering
\includegraphics[width=.68\linewidth]{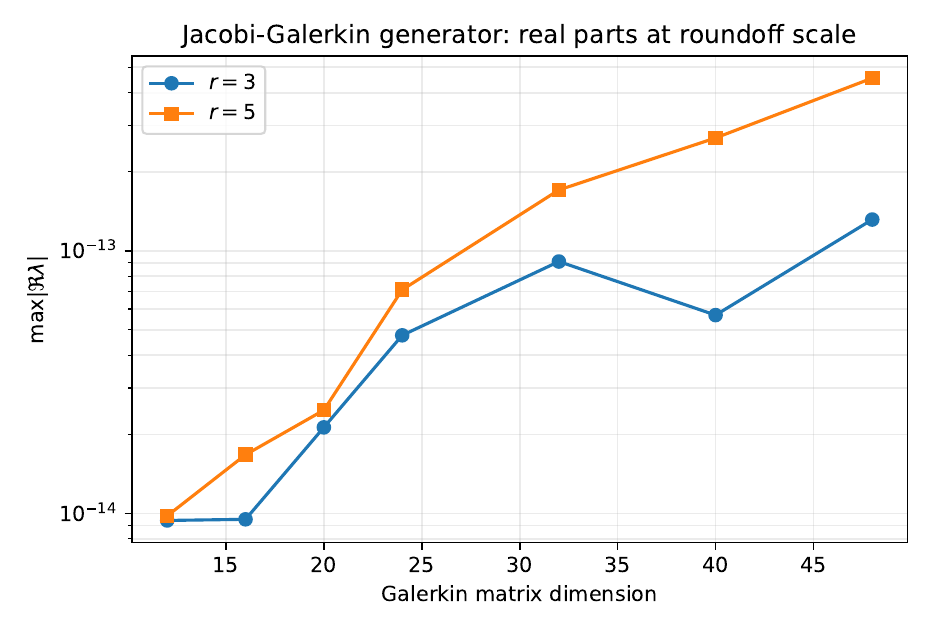}
\caption{Maximum absolute real part of the eigenvalues of the structure-adapted Jacobi--Galerkin truncation of $a\partial_z\Ss$.  Values remain at the level of floating-point roundoff as the truncation is increased.}
\label{fig:generator-realparts}
\end{figure}

\section{Discussion}
\label{sec:disc}

The main outcome of this work is a separation of mechanisms that coincide in the classical Rosenau--Hyman compacton.  The gradient-power exponent $r$ leaves the amplitude--speed relation unchanged but destroys the amplitude-independent width as soon as $r\neq1$.  At the same time, two different transition curves govern two different geometric questions: $pr=1$ determines whether the solitary wave reaches the vacuum at finite distance, whereas $r(p-1)=2$ determines whether that free boundary is regular or edge-cusped.  Compact support and divergent edge slope are therefore not manifestations of a single threshold.  The family \eqref{eq:E} makes this distinction explicit because both transitions follow from the same closed travelling-wave quadrature.

The weak-solution result shows that the geometric singularity is milder for the flux than for the profile.  Even when $|U_z|\to\infty$ at the free boundary, the quantities that enter the conservation law vanish with sufficient regularity to prevent singular measures; similarly, the loss of $C^2$ regularity at the crest for $r>1$ does not create a distributional defect.  This separation between profile regularity and flux regularity is also visible numerically.  The high-frequency Fourier tail is controlled by the exact edge exponent, not by the crest singularity: the measured laws $k^{-9/5}$ and $k^{-5/3}$ for $r=3$ and $r=5$ agree with $k^{-1-\alpha}$, and the corresponding projection errors agree with $N^{-\alpha-1/2}$.  Thus the free-boundary geometry quantitatively predicts the loss of spectral convergence.  This provides a direct bridge between an exact singularity of the nonlinear wave and the approximation properties of a global numerical method.

The singular linearisation exhibits a complementary separation.  For $r>1$ the coefficient $b$ vanishes at the crest like $|z|^{(r-1)/r}$, while the compacton endpoints are singular Sturm--Liouville endpoints.  Nevertheless, after the compacton-adapted coordinate change the natural Friedrichs realisation is exactly of Jacobi type.  The crest degeneracy is encoded by the transmission condition and changes the spectrum continuously with $r$, but it creates no additional negative direction: $\Ss$ has exactly one negative eigenvalue and a simple translation kernel for the whole range covered by Theorem~\ref{thm:spectrum}.  The independence of these eigenvalues from $c$ and $\mu$ further shows that the Morse index is controlled by the exponents rather than by the scale of the wave.

This self-adjoint information should not be confused with stability of the compacton. Classical compacton stability criteria rely on additional variational or Hamiltonian structure \cite{DeyKhare1998}, whereas the evolution generator here is $a\partial_z\Ss$, not $\Ss$ itself, and no Hamiltonian index theorem is available for \eqref{eq:E}.  The Jacobi--Galerkin calculations show no detectable real part in the truncated non-zero point spectrum, but this is only a modal diagnostic.  Degenerate KdV-type equations can exhibit strong high-frequency amplification or ill-posedness in standard Sobolev spaces even when a conventional point-spectrum calculation is inconclusive \cite{AmbroseSimpsonWrightYang2012,JeongOh2026}; related models can recover positive well-posedness statements after solution-adapted changes of variables \cite{GermainHarropGriffithsMarzuola2019}.  Equation~\eqref{eq:E} is particularly interesting in this respect because its generator degenerates at two geometrically distinct locations: at the free boundary for all compactons and, when $r>1$, at the crest where $\partial_z\Fp(U)=0$.  Determining whether these two degeneracies produce different high-frequency mechanisms is a natural continuation of the present work.

Several further directions remain open.  A standard Hamiltonian formulation is not evident for $r\neq1$; the generalized Lagrange formulation known for the absolute $|K|(p,p)$ equation \cite{GarralonRusVillatoro2023CSF} does not provide the Hamiltonian variational structure of the Cooper--Shepard--Sodano family.  Compacton interactions are another natural test.  Since the width now varies with amplitude, collisions between waves of different speeds no longer compare equal-width objects, and the robust compacton--anticompacton interactions observed for $r=1$ \cite{GarralonRusVillatoro2023CSF,GarralonRusVillatoro2023CNSNS} need not persist in the edge-cusped regime.  Numerically, methods adapted to the exact edge singularity or to the Jacobi coordinate are likely to be more informative than global Fourier schemes once high accuracy is required.

Taken together, the value of \eqref{eq:E} is not only that it supplies another explicit compacton family.  It provides a controlled nonlinear-wave model in which compactification, edge regularity, weak admissibility, singular spectral structure and numerical convergence can be varied and compared within one exactly solvable framework.  The two critical curves and the explicit spectral factor make the family a useful benchmark for the broader analysis of degenerate dispersive waves with free boundaries.

\section*{Acknowledgements}
This work was supported by Universidad de M\'alaga, project PPRO-B4-2026-004 (Plan Propio de Investigaci\'on, Transferencia y Divulgaci\'on Cient\'{\i}fica). The author acknowledges the computational resources, technical expertise, and assistance provided by the SCBI (Supercomputing and Bioinformatics Center) at the Universidad de M\'alaga, including access to the Picasso Supercomputer.

During the preparation and revision of this manuscript, the author used OpenAI ChatGPT (GPT-5.6 Sol) and Anthropic Claude (Opus 5) for language revision, mathematical checking, code review and assistance with numerical post-processing. All AI-assisted output was critically reviewed and edited by the author, who takes full responsibility for the content of the manuscript.

\section*{Data availability}
All scripts, figures and numerical data needed to reproduce the computational results are openly available in the development repository on \href{https://github.com/FrancisRVillatoro/gradient-power-compactons}{GitHub}.  The exact reproducibility release used for this manuscript is version 1.0.0, archived on Zenodo (DOI \href{https://doi.org/10.5281/zenodo.22671976}{10.5281/zenodo.22671976}) \cite{VillatoroGradientPowerCode2026}.

\bibliographystyle{iopart-num}
\bibliography{refs}

\end{document}